\documentclass[11pt]{article}

\usepackage[margin=1.1in]{geometry}
\usepackage{amsmath,amssymb,amsthm}
\usepackage{mathtools}
\usepackage[colorlinks=true,linkcolor=blue,citecolor=blue,urlcolor=blue]{hyperref}
\usepackage{enumitem}
\usepackage{comment}
\newtheorem{theorem}{Theorem}[section]
\newtheorem{lemma}[theorem]{Lemma}
\newtheorem{proposition}[theorem]{Proposition}
\newtheorem{corollary}[theorem]{Corollary}

\theoremstyle{definition}
\newtheorem{definition}[theorem]{Definition}
\newtheorem{example}[theorem]{Example}
\theoremstyle{remark}
\newtheorem{remark}[theorem]{Remark}

\newcommand{\Q}{\mathbb{Q}}
\newcommand{\Z}{\mathbb{Z}}
\newcommand{\F}{\mathbb{F}}
\newcommand{\fa}[1]{\Q\langle #1\rangle}                         
\newcommand{\ff}[1]{\Q(\!\langle #1\rangle\!)}                   
\newcommand{\ncrk}{\operatorname{nc\text{-}rank}}
\newcommand{\rk}{\operatorname{rank}}
\newcommand{\val}{\operatorname{val}}
\renewcommand{\r}{\mathfrak{r}}
\newcommand{\NC}{\mathsf{NC}}
\newcommand{\AC}{\mathsf{AC}}
\newcommand{\RNC}{\mathsf{RNC}}
\newcommand{\PSPACE}{\mathsf{PSPACE}}
\newcommand{\EXPSPACE}{\mathsf{EXPSPACE}}
\newcommand{\DSPACE}{\mathsf{DSPACE}}
\newcommand{\BPSPACE}{\mathsf{BPSPACE}}
\newcommand{\ncFULL}{\textup{\textsc{ncFull}}}

\title{Rational Identity Testing for Noncommutative Circuits\\ is in Polynomial Space}
\author{V.\ Arvind\thanks{Institute of Mathematical Sciences, Chennai, India. \texttt{arvind@imsc.res.in}}
\and Pushkar S.\ Joglekar\thanks{Vishwakarma Institute of Technology, Pune, India. \texttt{joglekar.pushkar@gmail.com}}}
\date{September 27, 2026}

\begin{document}
\maketitle

\begin{abstract}
  The rational identity testing problem, RIT, asks whether an input \emph{rational circuit}, i.e.\ a noncommutative arithmetic circuit that has inversion gates, computes the zero element of the free skew field. For rational \emph{formulas} the problem is known to be in deterministic polynomial time \cite{GGOW20}, \cite{IQS18}. For rational circuits of size $s$ the associated linear pencil has dimension $2^{O(s)}$, and the known algorithms use exponential time and exponential space.

  We show in this note that RIT for rational circuits over $\Q$ and finite fields, including the question whether the circuit is a correct computation at every inversion gate, is decidable in deterministic space $O(s^{O(1)})$, and hence is in $\PSPACE$. As a consequence, noncommutative PIT for degree unrestricted input noncommutative circuits is also in $\PSPACE$. The proof depends on algorithms that take succinct representations of their input.
  It has the following three observations that we put together to obtain the
  $\PSPACE$ upper bound.
\begin{itemize}  
\item 
Analyzing the Hrubes-Wigderson construction of a linear pencil for an input rational formula, we obtain a succinctly represented linear pencil $A$ of size $2^{O(s)}$ for the input rational \emph{circuit} of size $s$. More precisely, given indices $i$ and $j$ for the linear pencil we can compute $A_{i,j}$ in space polynomial in $s$ and length of $i,j$. This algorithm essentially gives a succinct presentation of the linear pencil $A$ obtained by the reduction.
\item
By the recent theorem of Chatterjee, Ghosh, Gurjar, Raj and Thierauf that deciding whether or not a symbolic matrix $\sum_i A_ix_i$ has full noncommutative rank is in $\NC$. We note that their algorithm is
  logspace-uniform $\NC$ and we use it as a black-box for computing rank of succinctly represented pencil $A$.
 
\item  
Finally, we note a folklore simulation which we state and prove in the form we need. If a problem is solved by a logspace-uniform family of \emph{deterministic} Boolean circuits of polylogarithmic depth, and its input is not written down but is presented by a polylogspace subroutine that returns any requested input bit, then the circuit can be evaluated in polylogarithmic space. Hence, simulating a depth $O(\log^{i}M)$ circuit for succinctly presented inputs of length $M = 2^{\Theta(s)}$ yields a polynomial space algorithm. 
\end{itemize}

\end{abstract}


\section{Introduction}\label{sec:intro}

Let $\fa{X} = \fa{x_1,\dots,x_n}$ be the free associative algebra over $\Q$ in noncommuting variables and let $\ff{X}$ be its universal field of fractions, the free skew field of Amitsur and Cohn \cite{Coh95,Coh06}. A \emph{noncommutative rational circuit} is an arithmetic circuit with gates $\{+,\times,{}^{-1}\}$ over $\Q$ and inputs from $X \cup \Q$. An inversion gate is legal precisely when its argument is a nonzero element of the skew field. A circuit all of whose inversion gates are legal is called \emph{correct}, and it computes an element $\widehat{\Phi}$ of $\ff{X}$. The rational identity testing problem, RIT, is to decide, given a circuit $\Phi$, whether $\Phi$ is correct, and if it is, whether $\widehat{\Phi} = 0$.

\paragraph{Prior work.}
The algorithmic study of RIT was initiated by Hrube\v{s} and Wigderson \cite{HW15}. They reduced RIT for rational \emph{formulas} to the question whether an associated \emph{linear pencil}
\[
L \;=\; Q_0 + \sum_{k=1}^{n} Q_k x_k , \qquad Q_k \in \Q^{N \times N},
\]
of dimension $N$ linear in the formula size, is \emph{full}, that is, invertible over the free skew field. Through this reduction, RIT for formulas is known to be in $\PSPACE$ by the Cohn--Reutenauer criterion \cite{CR99}, which expresses non-fullness of a polynomial-size pencil as the feasibility of a system of commutative polynomial equations in polynomially many unknowns. It is in deterministic polynomial time in the white-box model, by the operator scaling algorithm of Garg, Gurvits, Oliveira and Wigderson \cite{GGOW16,GGOW20} over $\Q$, and by the constructive rank algorithm of Ivanyos, Qiao and Subrahmanyam \cite{IQS18} over $\Q$ and also over finite fields. We also know a deterministic $\NC$-Turing reduction from RIT for formulas to the linear matrix rank problem \cite{AJ24}. In the black-box model there is a deterministic quasipolynomial-time algorithm \cite{ACM24stoc}. 
For rational \emph{circuits} the situation is very different. A circuit of size $s$ unrolls into a formula of size $2^{O(s)}$, so its pencil has dimension $2^{O(s)}$. The polynomial-time rank algorithms run on that pencil in time and space $2^{O(s)}$, and the Cohn--Reutenauer route meets a polynomial system with $2^{O(s)}$ unknowns, giving only $\EXPSPACE$. To the best of our knowledge, no polynomial-space upper bound for RIT on circuits was known.

\paragraph{Proof Overview.}
\begin{enumerate}
\item The first step is some well-known complexity theory. A deterministic (logspace) uniform family ofbounded fanin Boolean circuits of depth $D(M)$ can be evaluated in deterministic workspace $O(D(M))$, provided $D(M)$ is at least logarithmic in the circuit size \cite{Bor77,Ruz81}. For uniform $\NC^{i}$ this is the containment $\NC^{i} \subseteq \DSPACE(\log^{i} M)$. It is known that presenting an input \emph{succinctly}, by a small Boolean circuit that returns the bit at a given position, raises the complexity of a problem by one exponential \cite{GW83,PY86,BLT92} given that the succinct presentation
circuit of size $s$ is for inputs of length $2^{O(s)}$.

\item Putting the two together, suppose the input string has length $M = 2^{\Theta(s)}$ but is presented by a succinct circuit description of size $s$, then a $\DSPACE(\log^{i}M)$ algorithm essentially becomes a $\DSPACE(s^{i})$ algorithm, because $\log^{i}(2^{\Theta(s)}) = \Theta(s^{i})$. A space-bounded machine never needs its input written down. It holds a position index, which is only $\Theta(s)$ bits, and calls the succinct description circuit whenever it wants a bit. In other words, the succinct version of any problem in deterministic uniform $\NC$ is in $\PSPACE$. This is folklore. Nevertheless, we state it as Theorem~\ref{thm:succsim} and prove it, because the exact hypotheses matter for the main result.

\item Now, coming to the RIT problem for rational circuits, based on Hrubes-Wigderson construction \cite{HW15}, we observe that a rational circuit of size $s$ linearizes into a pencil $L$ of dimension $N = 2^{O(s)}$ whose fullness of rank decides the identity testing question. Although the pencil $L$ is too large to write down, we argue that each entry $L[i,j]$ is computable from the indices $i,j$ in space $O(s^{O(1)})$. This algorithm therefore is a succinct presentation of the linear pencil $L$ as input to the noncommutative rank problem, where the succinct representation's size is $O(s^{O(1)})$. 

\item Now, we will crucially need the result of Chatterjee, Ghosh, Gurjar, Raj and Thierauf \cite{CGGRT26} who proved checking if a homogeneous noncommutative linear pencil $\sum_i A_i x_i$ has full noncommutative rank is in $\NC$. We will apply the polylogspace simulation, described above, of the logspace-uniform $\NC^{2}$ algorithm of \cite{CGGRT26} with input as the succinctly presented linear pencil $L$. This entire simulation can be carried out in deterministic space $O(\log^{2}M) + O(s^{O(1)}) = O(s^{O(1)})$ which is essentially
polynomial in $s$.
\end{enumerate}


\begin{theorem}[Theorem~\ref{thm:main}]

  There is a deterministic Turing machine which, given a noncommutative rational circuit $\Phi$ of size $s$ over $\Q$ in variables $x_1,\dots,x_n$, decides in space $O(s^{O(1)})$:
\begin{enumerate}[label=(\alph*),nosep]
\item whether $\Phi$ is a correct circuit, identifying an inversion gate whose argument evaluates to $0$ if it is not; and
\item if $\Phi$ is correct, whether $\widehat{\Phi} = 0$ in the free skew field.
\end{enumerate}
In particular noncommutative PIT for degree-unrestricted noncommutative circuits is in $\PSPACE$.
\end{theorem}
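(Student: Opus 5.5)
We follow the four-step outline of the Proof Overview. First, we process the inversion gates of $\Phi$ in topological order $g_1,\dots,g_t$. For each $g_k$ we test whether its argument $\Phi_{g_k}$ is zero in $\ff{X}$, assuming all earlier inversion gates are legal. This subcircuit is itself a correct rational circuit of size at most $s$. If some test reports zero, the first such gate is the witness for part (a). Otherwise $\Phi$ is correct, and one final test on the output gate answers part (b). There are at most $s$ tests, and workspace can be reused between them. So it suffices to decide, in space $s^{O(1)}$, whether a correct rational circuit of size $s$ computes zero.

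For a single test we use the Hrube\v{s}--Wigderson linearization applied to the formula obtained by unrolling the circuit. Each node of the unrolled formula is named by a root-to-node path in $\Phi$, which is a string of length at most $s$ over the gate fan-in alphabet; hence $N=2^{O(s)}$. The Hrube\v{s}--Wigderson construction is defined inductively on the structure of the formula.
\begin{itemize}
\item For $f+g$ and $f\cdot g$ it builds a block-structured pencil from the pencils of $f$ and $g$, gluing them only through their first and last rows and columns.
\item For $f^{-1}$ it borders the pencil of $f$ by a constant number of rows and columns.
\end{itemize}
We therefore assign to every node an index interval whose offsets are computable from the path by recursion along that path. To compute an entry $L[i,j]$, we descend from the root, at each gate deciding which child block contains $i$ and which contains $j$. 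The sizes of subformula pencils are needed for these decisions; they can be computed by a recursion of depth $s$ on the circuit using $O(s)$-bit counters. Once $i$ and $j$ fall into different blocks, or we reach a leaf or a gluing or bordering position, we output the constant or the variable $x_k$. Coefficients are $\pm1$ or input constants of $\Phi$, so each entry has polynomial bit length. The whole computation uses $O(s^2)$ space.

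Zero-testing also needs a pencil whose non-fullness is equivalent to $\widehat{\Phi}=0$. We use the standard trick: $\widehat{\Phi}\neq0$ iff $\Phi^{-1}$ is defined, and we read this off from fullness of the pencil for $\Phi$, bordered as in \cite{HW15}. To remove the constant term $Q_0$ we add a fresh variable $x_0$ and take the homogeneous pencil $Q_0x_0+\sum_k Q_kx_k$; fullness is preserved under this homogenization, as is standard. We then run the logspace-uniform $\NC^2$ algorithm of \cite{CGGRT26} on this $2^{O(s)}$-dimensional input through Theorem~\ref{thm:succsim}. Each input-bit query is answered by the entry subroutine above in space $s^{O(1)}$. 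The total space is $O(\log^2 M)+s^{O(1)}=s^{O(1)}$. For finite fields the same argument goes through, given that \cite{CGGRT26} covers them, possibly after passing to an extension field of size $\mathrm{poly}(N)$ with a succinctly described representation. The PIT corollary follows because a circuit without inversion gates is a special case.

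The main obstacle is the entry subroutine, because the Hrube\v{s}--Wigderson pencil for a product or inverse is not a plain direct sum. We must verify that every entry depends only on $O(1)$ coordinates of each sub-block, such as its first and last row and column, so that the recursive descent is well defined. We must also check that the circuit-size blowup of the sizes stays bounded by $2^{O(s)}$, so that index arithmetic fits in $O(s)$ bits. Our first attempt will be to fix a normal form of the construction where each sub-pencil has a designated corner interface, and to prove by induction on depth that entries are computable by this descent.
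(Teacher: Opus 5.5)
Your proposal is correct and follows essentially the same route as the paper: the driver of Proposition~\ref{prop:driver} over inversion gates in topological order, the Hrube\v{s}--Wigderson pencil $A_{F^{-1}}$ of the unrolled formula with a path-descent entry oracle using precomputed dimensions $p(g)$ (Lemma~\ref{lem:oracle}), homogenization, and the succinct polylog-space simulation of the $\NC^{2}$ algorithm for \ncFULL{}. The obstacle you flag is handled in the paper by the explicit form of the construction. There the only cross-block entries not given by a fixed scalar pattern are the sum-rule copy of the first column of $A_G$, which is a standard basis column by Lemma~\ref{lem:structure}(a), so each clause of the descent has at most one continuation and no variable occurrence is duplicated.
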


\section{Uniform parallel algorithms on succinct inputs}\label{sec:sim}

Throughout this section ``workspace'' counts worktape cells of a deterministic Turing machine. A region of workspace is called \emph{reusable} if it may be overwritten between calls, so that its size is added and not multiplied. Logarithms are base $2$.

\subsection{Succinct inputs}

\begin{definition}[succinct presentation; bit oracle]\label{def:succinct}
Let $M \in \mathbb{N}$ and $x \in \{0,1\}^{M}$. A \emph{succinct presentation of $x$ with parameters $(b,\sigma)$} consists of
\begin{enumerate}[label=(\roman*),nosep]
\item the number $M$, written in binary, with $M \le 2^{b}$; and
\item a deterministic subroutine, the \emph{bit oracle}, which on input $j \in [M]$ written in binary outputs the bit $x_j$, using reusable workspace at most $\sigma$.
\end{enumerate}
An algorithm \emph{decides $L$ on succinctly presented inputs} if, given $b$ and access to the bit oracle as a subroutine, it decides whether $x \in L$.
\end{definition}

The case described in the introduction is the special case in which the bit oracle is a Boolean circuit.

\begin{example}[the circuit presentation]\label{ex:circuitpres}
Let $C$ be a Boolean circuit with $n$ inputs, one output and $s$ gates, and let
\[
x_C \;:=\; \bigl(C(\langle 1\rangle),\,C(\langle 2 \rangle),\,\dots,\,C(\langle 2^{n}\rangle)\bigr) \;\in\; \{0,1\}^{2^{n}},
\]
where $\langle j \rangle$ is the $n$-bit binary encoding of $j-1$. Then $C$ is a succinct presentation of $x_C$ with $M = 2^{n}$, $b = n$ and $\sigma = O(s)$. Indeed, evaluating $C$ at a given input is an instance of the circuit value problem, and a circuit of size $s$ is evaluated in workspace $O(s)$ by computing the gate values in topological order and storing all $s$ of them. Here $n \le s$, so $b \le s$.
\end{example}

We will also use a presentation which is not a Boolean circuit. In Section~\ref{sec:main} the bit oracle is the entry oracle of a linear pencil. Definition~\ref{def:succinct} is deliberately agnostic about how the oracle is implemented, and only its workspace $\sigma$ enters the bounds.

\subsection{Uniform circuit families}

\begin{definition}[uniform family]\label{def:uniform}
Let $\mathcal{C} = \{C_M\}_{M \ge 1}$ be a family of Boolean circuits where $C_M$ has $M$ inputs, one output and fan-in at most $2$. The family has \emph{size} $S(M)$ and \emph{depth} $D(M)$ if $C_M$ has at most $S(M)$ gates and depth at most $D(M)$. Gates of $C_M$ are named by binary strings of length $\lambda(M) = O(\log S(M))$.

The family is \emph{uniform with uniformity space $u(M)$} if there is a single deterministic procedure which, given $M$ in binary and using reusable workspace at most $u(M)$,
\begin{enumerate}[label=(\roman*),nosep]
\item returns the name of the output gate of $C_M$;
\item on a gate name $g$, returns the \emph{type} of $g$, which is one of $\wedge$, $\vee$, $\neg$, a Boolean constant, or ``input gate number $j$'' together with $j \in [M]$ in binary; and
\item on a gate name $g$ and a child index $c \in \{1,2\}$, returns the name of the $c$-th child of $g$.
\end{enumerate}
\end{definition}

Items (i) to (iii) are the usual \emph{direct connection} presentation of a circuit family. Both of the standard uniformity conditions supply them within the space we need. $\mathsf{DLOGTIME}$-uniformity in the sense of Ruzzo \cite{Ruz81}, which is the default for $\NC$, decides the direct connection language in time $O(\log M)$ on a random-access machine and hence in space $O(\log M)$. Logspace-uniformity gives $u(M) = O(\log M)$ as well.\footnote{Logspace-uniformity is usually defined by a transducer that maps $1^{M}$ to a description of $C_M$ in space $O(\log M)$. Such a transducer is simulated without $1^{M}$ physically present: store its input-head position, an $O(\log M)$-bit counter, and answer every read with $1$ except at the two ends. Its output is a description of $C_M$ and has length $\mathrm{poly}(M)$, so it too is not stored, but is scanned as it is produced on the write-only output tape. The simulator retains only the queried gate name, the field it is currently reading, and a comparison pointer, which is $O(\log M)$ bits in all.}

\begin{definition}[uniform $\NC^{i}$]\label{def:nci}
For $i \ge 1$, a language $L$ is in uniform $\NC^{i}$ if it is decided by a uniform family as in Definition~\ref{def:uniform} with $S(M) = M^{O(1)}$, $D(M) = O(\log^{i}M)$ and $u(M) = O(\log M)$.
\end{definition}

\subsection{The simulation}

\begin{theorem}[succinct simulation]\label{thm:succsim}
Let $L$ be decided by a uniform family $\{C_M\}$ of Boolean circuits of fan-in at most $2$, of size $S(M)$, depth $D(M)$ and uniformity space $u(M)$. Suppose $x \in \{0,1\}^{M}$ is given by a succinct presentation with parameters $(b,\sigma)$. Then whether $x \in L$ is decidable in deterministic workspace
\[
O\bigl(D(M) \;+\; \log S(M) \;+\; b \;+\; u(M)\bigr) \;+\; \sigma ,
\]
and in time $2^{O(D(M))}\cdot \mathrm{poly}\bigl(D(M) + \log S(M) + b\bigr)\cdot T_{\mathrm{or}}$, where $T_{\mathrm{or}}$ bounds the running time of the bit oracle and of the uniformity procedure.
\end{theorem}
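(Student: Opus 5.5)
We evaluate the output gate of $C_M$ by a recursive depth-first procedure $\mathrm{Eval}(g)$, following Borodin's classical argument, but we never store the circuit or the input. Every structural query goes to the uniformity procedure of Definition~\ref{def:uniform}, and every input bit goes to the bit oracle of Definition~\ref{def:succinct}. First, using $M$ (at most $b$ bits) and workspace $u(M)$, we obtain the name of the output gate. $\mathrm{Eval}(g)$ then asks for the type of $g$ and acts by cases.
\begin{itemize}[nosep]
\item If $g$ is a constant, it returns the constant.
\item If $g$ is input gate number $j$, it writes $j$ (at most $b$ bits) into a scratch area and calls the bit oracle, which runs in reusable workspace $\sigma$.
\item If $g$ is $\neg$, $\wedge$ or $\vee$, it obtains the child names and recursively evaluates the children.
\end{itemize}

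The key design choice is what goes on the recursion stack. If we pushed a full gate name of $\lambda(M)=O(\log S(M))$ bits at every level, the cost would be $O(D(M)\log S(M))$, which is too much. Instead, each stack frame holds only a constant amount of information: the child index $c\in\{1,2\}$ taken at that level, plus one bit for the value of the first child already computed (this also allows short-circuiting). The current gate name is kept in a single register of $\lambda(M)$ bits.

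When a child returns, we need the parent's name back. We recompute it by replaying the path from the output gate along the stacked child indices, with one call to the uniformity procedure per step. Each such call uses the same reusable $u(M)$ workspace and overwrites a single name register (we keep two registers for safety). This yields the space bound:
\begin{itemize}[nosep]
\item $O(D(M))$ for the stack of child indices and partial values;
\item $O(\log S(M))$ for a constant number of gate-name registers;
\item $O(b)$ for $M$ and the input index $j$;
\item $O(u(M))$ for the uniformity procedure, and $\sigma$ additively for the bit oracle.
\end{itemize}
The last two are additive because all subroutine calls are sequential and their workspace is reused.

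For time, the recursion tree of the unrolled circuit has at most $2^{D(M)+1}$ nodes. At each node we do at most one path replay of length $\le D(M)$, and each replay step or leaf costs a subroutine call of time at most $T_{\mathrm{or}}$ plus bookkeeping of $\mathrm{poly}(D(M)+\log S(M)+b)$. The total is therefore $2^{O(D(M))}\cdot\mathrm{poly}(D(M)+\log S(M)+b)\cdot T_{\mathrm{or}}$.

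The main obstacle is making the accounting honest. Specifically, we must check that recomputing parent names by replay is correct when the children of a gate coincide or when gates are shared; this is fine because we only follow paths and never identify gates. We must also confirm that the bit oracle and the uniformity procedure are invoked strictly as subroutines whose workspace can be erased between calls. The subroutines do not nest: the bit oracle is never called while the uniformity procedure is active. Hence the maximum of their workspaces, not their sum times the depth, is what gets added, which gives the stated $O(D+\log S+b+u)+\sigma$ bound.
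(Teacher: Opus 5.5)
Your proposal is correct and follows essentially the same route as the paper: a depth-first evaluation that stores only the path of child indices in $\{1,2\}$ with one attached bit per level for the first child's value, recovers the current gate name by replaying that path from the output gate through the uniformity procedure, and charges the gate-name, input-index, uniformity and oracle workspace additively because it is reused between calls. The differences are only in presentation: the paper writes the recursion as an explicit descend/ascend loop, and it bounds the running time by counting path-register configurations rather than nodes of the unrolled recursion tree.
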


\begin{proof}
The algorithm is a depth-first evaluation of $C_M$ which stores no gate values except along the current root-path, and which stores no gate names at all.

\medskip\noindent\emph{Data.} We maintain a \emph{path register} $\pi \in \{1,2\}^{\le D(M)}$, which is the sequence of child-choices taken from the output gate down to the gate currently under evaluation, and alongside each symbol of $\pi$ one bit $\mathrm{st}$ recording the value of the first child at that level once that value has been computed. We also maintain a reusable \emph{scratch region} of $O(\lambda(M) + b + u(M)) = O(\log S(M) + b + u(M))$ bits holding one gate name, one input index, and the workspace of the uniformity procedure, together with the reusable oracle region of $\sigma$ bits. The path register and its attached bits occupy $2D(M)$ bits. No further storage is used, and in particular no recursion stack of gate names is kept.

\medskip\noindent\emph{Recovering the current gate.} Define a subroutine $\mathrm{NAME}(\pi)$ as follows. Obtain the name of the output gate from Definition~\ref{def:uniform}(i), and then for $t = 1,\dots,|\pi|$ replace the name currently held by the name of its $\pi_t$-th child, using Definition~\ref{def:uniform}(iii). This is a loop and not a recursion: only one gate name is ever held, and the workspace of the uniformity procedure is reused at every step. Its cost is $O(\lambda(M) + u(M))$ workspace and $|\pi| \le D(M)$ calls. This is the step at which uniformity is used, and it is what allows the path to be stored as $O(1)$ bits per level rather than as a list of gate names.

\medskip\noindent\emph{Main loop.} Start with $\pi$ empty, in the descending phase.

\emph{Descend.} Compute $g := \mathrm{NAME}(\pi)$ and its type.
\begin{itemize}[nosep]
\item If $g$ is a Boolean constant, set $\mathrm{val}$ to that constant and ascend.
\item If $g$ is the input gate number $j$, call the bit oracle on $j$, set $\mathrm{val} := x_j$, and ascend. The index $j$ has at most $b$ bits and is held in the scratch region.
\item Otherwise $g$ is a $\neg$, $\wedge$ or $\vee$ gate. Append the symbol $1$ to $\pi$, with its attached bit unset, and descend again.
\end{itemize}

\emph{Ascend} with the value $\mathrm{val}$ of the gate at the end of $\pi$.
\begin{itemize}[nosep]
\item If $\pi$ is empty, output $\mathrm{val}$ and halt.
\item Otherwise let $c$ be the last symbol of $\pi$ and $\mathrm{st}$ its attached bit. Delete that symbol and recompute $g := \mathrm{NAME}(\pi)$ and its type, so that $g$ is the parent. If $g$ is $\neg$, replace $\mathrm{val}$ by $\neg\,\mathrm{val}$ and ascend. If $g$ is $\wedge$ or $\vee$ and $c = 1$, append the symbol $2$ to $\pi$ with attached bit $\mathrm{val}$, and descend. If $g$ is $\wedge$ or $\vee$ and $c = 2$, replace $\mathrm{val}$ by $\mathrm{st}\wedge\mathrm{val}$ or $\mathrm{st}\vee\mathrm{val}$ as appropriate, and ascend.
\end{itemize}

\medskip\noindent\emph{Correctness.} By induction on $D(M) - |\pi|$, the procedure ascends out of a path $\pi$ carrying in $\mathrm{val}$ exactly the value that the gate at the end of $\pi$ takes when $C_M$ is evaluated on $x$. The base cases are constants and input gates, and the inductive step is the definition of $\neg$, $\wedge$ and $\vee$. This is the only place where determinism is used, and it is used silently: a gate with several root-paths leading to it is evaluated once for each of them, and the induction is valid because all of these evaluations return the same value. The output is $C_M(x)$, which is $1$ if and only if $x \in L$.

\medskip\noindent\emph{Resources.} The workspace is $2D(M)$ for the path register, $O(\log S(M) + b + u(M))$ for the scratch region, and $\sigma$ for the oracle region. The last two are reused rather than accumulated, which gives the stated bound. For the running time, note that every configuration of the algorithm is determined by $\pi$ together with the attached bits and a phase, so the number of steps is at most $2^{O(D(M))}$, and each step costs one $\mathrm{NAME}$ computation, that is $O(D(M))$ calls to the uniformity procedure, plus at most one oracle call.
\end{proof}

\begin{corollary}[succinct $\NC^{i}$ is in $\PSPACE$]\label{cor:succNC}
Let $i \ge 1$ and let $L$ be in uniform $\NC^{i}$. Suppose $x \in \{0,1\}^{M}$ is given by a succinct presentation with parameters $(b,\sigma)$ where $b = O(s)$, so that $M \le 2^{O(s)}$. Then whether $x \in L$ is decidable in deterministic workspace
\[
O(s^{i}) \;+\; \sigma
\]
and in time $2^{O(s^{i})}\cdot\mathrm{poly}(s)\cdot T_{\mathrm{or}}$. In particular, if $\sigma = \mathrm{poly}(s)$ then the problem is in $\PSPACE$ measured in $s$.
\end{corollary}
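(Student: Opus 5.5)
The corollary follows by substituting the parameters of uniform $\NC^{i}$ into Theorem~\ref{thm:succsim}. By Definition~\ref{def:nci}, $L$ is decided by a uniform family with $S(M) = M^{O(1)}$, $D(M) = O(\log^{i} M)$ and $u(M) = O(\log M)$. We may assume without loss of generality that $M \ge 2$. Then $\log S(M) = O(\log M)$, and since $i \ge 1$ every term in $O(D(M) + \log S(M) + u(M))$ is $O(\log^{i} M)$.

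Next we use $M \le 2^{b}$ with $b = O(s)$. This gives $\log M \le b = O(s)$, so $\log^{i} M = O(s^{i})$. Theorem~\ref{thm:succsim} therefore bounds the workspace by
\[
O\bigl(s^{i} + s + s\bigr) + \sigma \;=\; O(s^{i}) + \sigma .
\]
For the running time, the same theorem gives $2^{O(D(M))} = 2^{O(s^{i})}$. The polynomial factor is $\mathrm{poly}(D(M) + \log S(M) + b) = \mathrm{poly}(s^{i}) = \mathrm{poly}(s)$ for fixed $i$, and it is absorbed into $2^{O(s^{i})}$ (it is kept separate only to match the stated form). This yields the time bound $2^{O(s^{i})}\cdot \mathrm{poly}(s) \cdot T_{\mathrm{or}}$.

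For the final claim, if $\sigma = \mathrm{poly}(s)$ then the total workspace is $O(s^{i}) + \mathrm{poly}(s) = \mathrm{poly}(s)$ for each fixed $i$, so the problem is in $\PSPACE$ as a function of $s$.

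Two points need a brief check rather than real work.

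\textbf{Uniformity space.} The uniformity procedure takes $M$ in binary and uses space $O(\log M)$, which is $O(b)$. So it fits inside the reusable scratch region, exactly as in the theorem.

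\textbf{Degenerate $M$.} The constants hidden in $O(\log^{i} M)$ are for large $M$. For $M$ bounded by a constant, the circuit $C_M$ has constant size and everything is trivially within the bound.

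There is no genuine obstacle. The only care needed is that $i$ is fixed, so the hidden constants depend on $i$ and on the family, not on $s$.
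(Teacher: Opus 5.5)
Your proposal is correct and matches the paper's proof, which likewise substitutes $D(M)=O(\log^{i}M)=O(s^{i})$, $\log S(M)=O(s)$ and $u(M)=O(s)$ into Theorem~\ref{thm:succsim} and uses $i\ge 1$ to absorb the lower-order terms. Your handling of the $b$ term, the running-time factor and degenerate $M$ fills in details the paper leaves implicit.
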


\begin{proof}
Apply Theorem~\ref{thm:succsim}. With $M \le 2^{O(s)}$ we get $D(M) = O(\log^{i}M) = O(s^{i})$, $\log S(M) = O(\log M) = O(s)$ and $u(M) = O(\log M) = O(s)$. Since $i \ge 1$ the sum is $O(s^{i}) + \sigma$.
\end{proof}

\begin{corollary}[the circuit presentation]\label{cor:circpres}
Let $i \ge 1$, let $L$ be in uniform $\NC^{i}$, and let
\[
\mathrm{Succinct}(L) \;:=\; \bigl\{\,C \;:\; C \text{ a Boolean circuit with } n \text{ inputs and one output, and } x_C \in L \,\bigr\}
\]
with $x_C \in \{0,1\}^{2^{n}}$ as in Example~\ref{ex:circuitpres}. Then $\mathrm{Succinct}(L) \in \DSPACE(O(s^{i}))$, where $s$ is the size of $C$. In particular $\mathrm{Succinct}(L) \in \PSPACE$.
\end{corollary}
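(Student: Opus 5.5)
The plan is to derive Corollary~\ref{cor:circpres} directly from Corollary~\ref{cor:succNC}, using Example~\ref{ex:circuitpres} to supply the succinct presentation. Given an input $C$ of size $s$ with $n$ inputs, we first read off $n$, which is bounded by $s$. We then fix $M = 2^{n}$ and write it in binary using $n+1 \le s+1$ bits. This gives $b = n \le s$, so $b = O(s)$ and $M \le 2^{O(s)}$, as Corollary~\ref{cor:succNC} requires.

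The bit oracle on an index $j \in [M]$ works as follows. It computes the $n$-bit encoding $\langle j\rangle$ of $j-1$, which needs $O(n)$ workspace. It then evaluates $C$ on $\langle j\rangle$ by scanning the gates of $C$ in topological order and storing one bit per gate. The circuit $C$ itself sits on the read-only input tape, so only the gate values are written down. If the encoding of $C$ is not already topologically sorted, we instead evaluate gates recursively by depth-first search. That recursion stack holds at most $s$ gate names of $O(\log s)$ bits each. Either way the oracle workspace is $\sigma = O(s \log s)$, or $O(s)$ in the sorted case, and it is reusable between calls.

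Substituting into Corollary~\ref{cor:succNC} gives workspace $O(s^{i}) + \sigma$. For $i \ge 2$ this is $O(s^{i})$. For $i = 1$, the bound $O(s^{i})$ requires $\sigma = O(s)$, so we adopt the standard convention that $C$ is encoded with its gates listed in topological order; Example~\ref{ex:circuitpres} implicitly uses this same convention. Syntactic checks on the input, such as verifying that $C$ is a well-formed circuit and computing $n$, take $O(\log s)$ space and are absorbed into the bound. Membership in $\PSPACE$ follows immediately, since $s$ is at most the input length.

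The main obstacle is the accounting at $i=1$, namely ensuring that the oracle really uses only $O(s)$ space. The remaining steps are routine instantiation of Corollary~\ref{cor:succNC}.
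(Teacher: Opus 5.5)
Your proposal is correct and is essentially the paper's proof, which simply instantiates Corollary~\ref{cor:succNC} with $b = n \le s$ and $\sigma = O(s)$ supplied by Example~\ref{ex:circuitpres}. Your extra care at $i=1$ is sound, and the topological-order convention is not even needed: repeated sweeps over the gates, storing a ``computed'' bit and a value bit for each gate, evaluate $C$ in $O(s)$ workspace for any gate ordering.
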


\begin{proof}
Example~\ref{ex:circuitpres} gives $b = n \le s$ and $\sigma = O(s)$. Apply Corollary~\ref{cor:succNC}.
\end{proof}

\begin{remark}[relation to the succinct-input literature]\label{rem:lit}
That succinct presentation raises complexity by one exponential is a standard theme. It goes back to Galperin and Wigderson \cite{GW83} and to Papadimitriou and Yannakakis \cite{PY86}, and was systematised as a family of upgrade theorems by Balc\'azar, Lozano and Tor\'an \cite{BLT92}: a problem complete for a class $\mathcal{C}$ under suitably restricted reductions has a succinct version complete for the exponentially larger class. Corollary~\ref{cor:succNC} is the space-bounded instance of that theme, and we do not claim it as new. We have stated it with an explicit bit-oracle interface rather than for a fixed complete problem, because that is the form in which Section~\ref{sec:main} needs it. Our succinct object is a linear pencil presented by an entry oracle, and not a Boolean circuit.
\end{remark}

\section{Rational circuits and the free skew field}\label{sec:prelim}

In this section we recall the basic definitions and fix the notation.

\subsection{Inner rank and fullness}

Throughout, $\fa{X} = \fa{x_1,\dots,x_n}$ is the free associative $\Q$-algebra and $\ff{X}$ is the free skew field, the universal field of fractions of $\fa{X}$ \cite{Coh95,Coh06}. The canonical map $\fa{X}\hookrightarrow \ff{X}$ is injective.

\begin{definition}[inner rank; noncommutative rank]\label{def:ncrank}
Let $A$ be a matrix over $\fa{X}$, not necessarily square. The \emph{inner rank} of $A$ is the least $\rho$ for which there is a factorization $A = BC$ with $B$ having $\rho$ columns and $C$ having $\rho$ rows, both over $\fa{X}$. We write $\ncrk(A)$ for the inner rank and also call it the \emph{noncommutative rank} of $A$. A square matrix $A \in \fa{X}^{N\times N}$ is \emph{full} if $\ncrk(A) = N$.
\end{definition}

\begin{proposition}[Cohn]\label{prop:dictionary}
For $A \in \fa{x_1,\dots,x_m}^{N\times N}$ the following are equivalent: (i) $A$ is invertible over $\ff{x_1,\dots,x_m}$; (ii) $A$ has rank $N$ over $\ff{x_1,\dots,x_m}$; (iii) $A$ is full. More generally, for an arbitrary matrix $A$ over $\fa{X}$, $\ncrk(A)$ equals the rank of $A$ over $\ff{X}$.
\end{proposition}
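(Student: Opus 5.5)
The statement just before this point is Proposition~\ref{prop:dictionary} (Cohn's dictionary). It is a classical fact, and I would prove it by citing Cohn's theory of the free skew field as the universal localization of $\fa{X}$. I would not develop it from scratch. Still, the chain of implications can be laid out as follows.

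\emph{(i)$\Leftrightarrow$(ii).} This is linear algebra over a skew field. Over any division ring $D$, a square matrix has rank $N$ (column rank equal to row rank) if and only if it is invertible. The proof is Gaussian elimination, which works verbatim over $D$. I would apply it with $D=\ff{X}$.

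\emph{(i)$\Rightarrow$(iii).} Suppose $A=BC$ with $B$ of size $N\times\rho$, $C$ of size $\rho\times N$, and $\rho<N$. Then over $\ff{X}$ the rank of $A$ is at most $\rho<N$, so $A$ is not invertible. Hence an invertible $A$ is full.

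\emph{(iii)$\Rightarrow$(i).} This is the deep direction and the main obstacle. I would invoke Cohn's theorem that $\fa{X}$ is a semifir. More precisely, it is a free ideal ring, via the weak algorithm. For a semifir $R$, Cohn constructs the universal field of fractions as the universal localization $R_\Sigma$, where $\Sigma$ is the set of all full matrices over $R$. By construction every full matrix becomes invertible in $R_\Sigma$. The key steps to cite are:
\begin{enumerate}[label=(\arabic*),nosep]
\item in a semifir, the full matrices form a multiplicative, factor-complete set;
\item hence, by Cohn's localization theorem, the universal $\Sigma$-inverting ring is a skew field;
\item this skew field is $\ff{X}$.
\end{enumerate}
These are Theorems 4.5.8 and 7.5.13 in \cite{Coh06}, or the corresponding results in \cite{Coh95}.

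\emph{General rank statement.} For an arbitrary $A$, the inequality $\rk_{\ff{X}}A\le\ncrk(A)$ follows from the factorization argument above. For the reverse inequality, write $\rho=\rk_{\ff{X}}A$. Over a semifir one can pick a full $\rho\times\rho$ submatrix, using Cohn's result that the inner rank equals the maximal order of a full submatrix, up to the usual stable-association argument for semifirs. That full submatrix is invertible over $\ff{X}$ by the direction just proved, so $\ncrk(A)\le\rho$ is tight.
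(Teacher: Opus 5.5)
Your approach is sound, but the last paragraph, as written, does not prove the inequality it claims. You set $\rho=\rk_{\ff{X}}A$ and then pick a full $\rho\times\rho$ submatrix. Such a submatrix, invertible over $\ff{X}$, exists by linear algebra over a skew field alone, and it is full by (i)$\Rightarrow$(iii). Since the inner rank of a submatrix is at most that of the whole matrix, this gives only $\ncrk(A)\ge\rho$. That is the easy inequality you already had from the factorization argument; nothing here bounds $\ncrk(A)$ from above.

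The argument has to start from $r:=\ncrk(A)$:
\begin{itemize}
\item Cohn's lemma (over a semifir, or more generally a Sylvester domain, a matrix of inner rank $r$ has a full $r\times r$ submatrix) supplies a full $r\times r$ submatrix.
\item The direction (iii)$\Rightarrow$(i) makes that submatrix invertible over $\ff{X}$.
\item Hence $\rk_{\ff{X}}A\ge r$.
\end{itemize}
With this relabeling your proof goes through, modulo the citations.

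With that fixed, your route differs from the paper's in order rather than in substance. The paper cites the rectangular statement directly: $\fa{X}$ is a fir, hence a Sylvester domain, and over a Sylvester domain inner rank equals rank over the universal field of fractions. It then reads off (ii)$\Leftrightarrow$(iii) as the square special case. You prove the square case first and split it in two. The elementary direction, invertible$\Rightarrow$full, holds because a factorization through fewer than $N$ columns persists in $\ff{X}$. The deep direction, full$\Rightarrow$invertible, holds because full matrices are inverted in the universal localization $\fa{X}_\Sigma=\ff{X}$. You then lift to rectangular matrices via the full-submatrix lemma.

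Your version shows exactly which direction needs the theory of semifirs, while the paper's is a single citation covering both statements. Neither is more elementary. Given the square case, the full-submatrix lemma is equivalent to the Sylvester-domain rank identity the paper invokes, so the deep input is the same in both, only relocated.
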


\begin{proof}
The equivalence of (i) and (ii) is Gaussian elimination over a skew field. For the general statement, and hence for the equivalence of (ii) and (iii), recall that $\fa{X}$ is a free ideal ring and hence a Sylvester domain, and that $\ff{X}$ is its universal field of fractions. Over a Sylvester domain the inner rank of a matrix equals its rank over the universal field of fractions \cite[Ch.~5, 7]{Coh06}.
\end{proof}

An \emph{affine linear pencil} is a matrix $L = Q_0 + \sum_{k=1}^{n} Q_k x_k$ with $Q_k \in \Q^{N \times N}$. A \emph{homogeneous} pencil is one in which $Q_0$ has been absorbed as the coefficient matrix of an extra variable.

\subsection{Rational circuits and correctness}

\begin{definition}[circuit, encoding, size]\label{def:circuit}
A \emph{rational circuit} $\Phi$ over $\Q$ in noncommuting variables $x_1,\dots,x_n$ is a directed acyclic graph given as a list of gates $g_1,\dots,g_t$ in topological order, children having smaller index, where each gate is one of the following:
\begin{itemize}[nosep]
\item an input gate labelled by some $x_i$ or by a rational constant $a/b$, with $a,b$ integers in binary and $b\neq 0$ checked syntactically;
\item a fan-in-two gate labelled $+$ or $\times$, with ordered children;
\item a fan-in-one gate labelled $(\,)^{-1}$.
\end{itemize}
One gate is designated the output. The \emph{size} $s$ is the number of gates plus the total bit-length of all constants, so that $n \le s$ and each constant has bit-size at most $s$. We write $\Phi_v$ for the subcircuit rooted at gate $v$, and call a circuit a \emph{formula} when its graph is a tree.
\end{definition}

\begin{definition}[semantics, correctness]\label{def:correct}
Define a partial map $\val$ from gates to $\ff{x_1,\dots,x_n}$ by structural recursion. The value of an input gate is its label; $\val(g+h) = \val(g)+\val(h)$ and $\val(g\times h) = \val(g)\val(h)$, defined if and only if both arguments are; and $\val(g^{-1}) = \val(g)^{-1}$, defined if and only if $\val(g)$ is defined and nonzero. The circuit $\Phi$ is \emph{correct} if $\val$ is defined at every gate, and then $\widehat{\Phi} := \val(\text{output})$.
\end{definition}

\begin{lemma}[integer constants]\label{lem:intconst}
Given a circuit $\Phi$ of size $s$, one can construct in space $O(s)$ a circuit $\Phi'$ of size $s' \le 4s$ whose constants are integers of bit-size at most $s$, such that every gate of $\Phi$ corresponds to a gate of $\Phi'$ computing the same element of the free skew field, and $\Phi'$ is correct if and only if $\Phi$ is. Each inserted inversion gate is legal whenever it is evaluated.
\end{lemma}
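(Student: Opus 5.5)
We clear denominators by replacing each rational constant $a/b$ with the integer expression $a\cdot b^{-1}$, built from integer constants using one inversion gate and one product gate. Concretely, process the gates $g_1,\dots,g_t$ of $\Phi$ in topological order and emit gates of $\Phi'$ as follows. A variable input, a $+$ gate, a $\times$ gate or a $(\,)^{-1}$ gate is copied verbatim, with its children renamed to their images. A constant input $a/b$ with $b=1$ (or $b$ syntactically $\pm1$, absorbing the sign into $a$) is copied as the integer $a$. A constant input $a/b$ with $|b|\ge 2$ is replaced by three gates: an input gate $a$, an input gate $b$, an inversion gate $u=b^{-1}$, and a product gate $a\times u$; this last gate is declared the image of $g$. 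The output of $\Phi'$ is the image of the output of $\Phi$. Each gate of $\Phi$ thus becomes at most four gates. The new constants $a$ and $b$ have bit-size at most the bit-length of $a/b$, which is at most $s$. The total constant bit-length does not grow, so $s'\le 4s$.

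Next we verify the semantics by induction along the topological order: $\val_{\Phi'}(\mathrm{image}(g))=\val_\Phi(g)$, with one side defined iff the other is. For constants, $b\neq 0$ holds syntactically, so $b$ is a nonzero element of $\Q\subset\ff{X}$. Hence the inserted inversion is legal and $a\cdot b^{-1}=a/b$ in $\ff{X}$; this gives the last sentence of the lemma. The copied gates follow from Definition~\ref{def:correct} applied on both sides with the inductive hypothesis. The only gates of $\Phi'$ that are not images are the integer inputs and the inserted inversions, and these are all defined. Therefore $\val$ is defined on all of $\Phi'$ iff it is defined on all of $\Phi$, so $\Phi'$ is correct iff $\Phi$ is.

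Finally we check the space bound, which is the only point needing care (though it is not a real obstacle). The transducer writes $\Phi'$ to its output tape and never stores it. The image of gate $g_k$ has index $\mathrm{off}(k)+c_k$, where $\mathrm{off}(k)$ is the number of output gates emitted for $g_1,\dots,g_{k-1}$. Whenever a child reference must be renamed, we recompute $\mathrm{off}$ by rescanning the input and counting which earlier gates are non-integer constants. This uses $O(\log s)$ counters, plus copying constants bit by bit straight from the input tape. Altogether the space is $O(\log s)\subseteq O(s)$, comfortably within the claim.
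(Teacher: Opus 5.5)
Your proposal is correct and takes the same approach as the paper: replace each constant $a/b$ by the four gates $a$, $b$, $b^{-1}$, $a\times b^{-1}$, use that inverting a nonzero rational scalar is always legal, and bound $s'\le 4t+B\le 4s$. You write ``three gates'' but then correctly list and count four, and your explicit logspace transducer argument for the space bound, which the paper leaves implicit, is a welcome addition.
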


\begin{proof}
Replace each constant gate $c = a/b$, with $b \ge 1$ after normalizing signs, by four gates: integer atoms $a$ and $b$, an inversion $b^{-1}$, and a product $a \times b^{-1}$. A nonzero integer $b$ is a nonzero central scalar of the free skew field and hence invertible, so the inserted inversion is always legal, and $ab^{-1} = a/b$. For the size, suppose $\Phi$ has $t$ gates of which $t_0$ are constant gates, and the constants have total bit-length $B$. Then $s = t+B$ and $s' \le (t + 3t_0) + B \le 4t + B \le 4s$.
\end{proof}

From now on we assume that all constants of $\Phi$ are integers of bit-size at most $s$. By Lemma~\ref{lem:intconst} this costs a replacement of $s$ by $s' \le 4s$, which changes none of the bounds below.

\section{From circuits to succinct pencils}\label{sec:pencil}

As stated in the abstract and introduction, Hrubes-Wigderson \cite{HW15} present an algorithm that takes as input a rational formula $\r$ of size $s$, in noncommuting variables $x_1,x_2,\ldots,x_n$ (over a field $F$
that is either $\Q$ or a finite field), and in polynomial-time computes a linear pencil $L$ of dimension $O(s)$ such that $\r$ is correct and
non-zero iff $L$ has full noncommutative rank.

Based on \cite{HW15}, in this section we present an algorithm that takes as input a rational \emph{circuit} $C$ of size $s$ and computes a succinct representation for a linear pencil of dimension $2^{O(s)}$. We remark here that the standard transformation of a rational circuit of size $s$ and depth $d$, with $+$ and $\times$ gates of fanin $2$ and inverse gates of fanin $1$, into a rational formula will yield a formula of size $2^{O(d)}\cdot s$ and depth $d$. In general, as $d\le s$ we obtain a formula of size $2^{O(s)}$. Thus, the \cite{HW15} construction of linear pencil $L$ corresponding to circuit $C$ will have dimension $2^{O(s)}$. More precisely, the $\PSPACE$ algorithm that we will describe takes as input $C,i,j$, where $i$ and $j$ are $O(s)$ bit integers indexing into the linear pencil $L$, and it outputs the entry $L[i,j]$. The algorithm takes space $s^{O(1)}$.

In the described construction we fix the field $F=\Q$.

\subsection{Rational Circuits to Rational Formulas}

The following lemma summarizes the standard transformation of noncommutative rational circuits to rational formulas.

\begin{lemma}\label{lem:unroll}

  For each gate $v$ of $\Phi$ define the formula $F_v$ by structural recursion, so that $F_v$ is the atom of $v$ if $v$ is an input gate, $F_v = F_g \circ F_h$ if $v = g \circ h$ for $\circ \in \{+,\times\}$, and $F_v = (F_g)^{-1}$ if $v = g^{-1}$, with shared subcircuits duplicated. Then:

  \begin{enumerate}[label=(\roman*),nosep]
  \item $\operatorname{depth}(F_v) = \operatorname{depth}(v) \le s-1$, and $F_v$ has at most $2^{s}$ nodes;
  \item the nodes of $F_v$ correspond to the directed paths of $\Phi$ that start at $v$, where a path records at each step which input port of the current gate is followed, and the subformula sitting at a node depends only on the gate at which its path ends;
  \item if $\Phi_v$ is correct then $F_v$ is a correct formula and $\widehat{F_v} = \widehat{\Phi_v}$.
  \end{enumerate}
\end{lemma}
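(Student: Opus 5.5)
We prove the three items of Lemma~\ref{lem:unroll} together by induction on the gates of $\Phi$ in topological order. Since children have smaller index, when we reach $v$ the claims already hold for its children. For each gate we define $\operatorname{depth}(v)$ as the length of a longest directed path from $v$ down to an input gate. Item (i) holds because $F_v$ is built from $F_g$ and $F_h$ by adding a single root. Hence $\operatorname{depth}(F_v) = 1 + \max(\operatorname{depth}(F_g),\operatorname{depth}(F_h)) = \operatorname{depth}(v)$, with the $\max$ taken over one child for an inversion gate; input gates give the base case, depth $0$. A longest path visits distinct gates because the graph is acyclic, so it has at most $t \le s$ vertices and hence $\operatorname{depth}(v)\le s-1$.

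For the node count we use the recursion $|F_v| \le 1 + |F_g| + |F_h|$ together with the induction hypothesis $|F_u| \le 2^{\operatorname{depth}(u)+1}-1$. This gives $|F_v| \le 1 + 2(2^{\operatorname{depth}(v)}-1) = 2^{\operatorname{depth}(v)+1}-1 \le 2^{s}-1$. The same recursion also yields item (ii). The nodes of $F_v$ are the root together with the nodes of the copy of $F_g$ reached through port $1$ and the nodes of the copy of $F_h$ reached through port $2$. Prefixing the port label to the inductively given paths is therefore a bijection onto the directed paths from $v$ with port labels, with the empty path corresponding to the root. Under this bijection, the subformula at the node for a path $\pi$ ending at gate $u$ is, by construction, a literal copy of $F_u$. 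So it depends only on $u$.

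For item (iii) we show by induction that, whenever $\val(v)$ is defined (that is, $\Phi_v$ is correct), every inversion node of $F_v$ is legal and $\widehat{F_v}=\val(v)$. For $+$ and $\times$ gates this is immediate from the hypothesis on the children, since $\Phi_g$ and $\Phi_h$ are correct when $\Phi_v$ is. For $v=g^{-1}$ the hypothesis gives $\widehat{F_g}=\val(g)\ne 0$, so the root inversion of $F_v$ is legal, and its value is $\val(g)^{-1}=\val(v)$.

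The one delicate point is the correspondence in (ii), which needs a precise convention for ports. Paths must be recorded as port sequences and not as gate sequences, because $g=h$ is allowed: for example, $v=g\times g$ still yields two distinct copies of $F_g$. Once that convention is fixed, every item is a routine structural induction.
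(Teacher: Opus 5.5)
Your proof is correct and follows essentially the paper's argument: structural induction along the topological order, with port-labelled paths giving the bijection in (ii) (including the same remark about $g=h$) and copies of $F_u$ computing $\val(u)$ for (iii). The only cosmetic difference is that you get the node bound from the recursion $|F_v|\le 2^{\operatorname{depth}(v)+1}-1$, whereas the paper reads it off the path correspondence of (ii), using fan-in at most $2$ and the fact that a directed path in the DAG has pairwise distinct vertices.
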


\begin{proof}

  For (i) and (ii), the fan-in is at most $2$, and the vertex sequence underlying such a path is a directed path in a DAG and hence has pairwise distinct vertices. Recording the port is what makes the correspondence a bijection when a binary gate uses the same child at both inputs. For (iii), induction on depth shows that every tree copy of a DAG gate computes the same element of the free skew field as that gate. Correctness of $F_v$ quantifies over all tree copies of the inversion gates of $\Phi_v$, and copies compute identical values.

\end{proof}

\subsection{The Hrube\v{s}--Wigderson pencil and its structure}

In this subsection we recall some details of the Hrube\v{s}--Wigderson pencil construction in a self-contained form that
is suitable for describing our algorithm.

To every (noncommutative rational) formula $F$ we associate a dimension $p(F)$ and a matrix $A_F$ over $\fa{X}$, by the rules of \cite[Thm.~2.6 and \S 6]{HW15}. Below, $u_k = (1,0,\dots,0)$ and $v_k = (0,\dots,0,1)$ are $1\times k$ row vectors, and unspecified entries are $0$.

\begin{definition}[syntactic pencil {\cite{HW15}}]\label{def:pencil}
\begin{align*}
\text{atom } a:\quad & p = 2, & A_F &= \begin{pmatrix} 1 & a\\ 0 & -1\end{pmatrix};\\[4pt]
F = G\times H:\quad & p = p(G)+p(H), & A_F &= \begin{pmatrix} A_G & -v_{p(G)}^{\mathsf{T}} u_{p(H)}\\ 0 & A_H \end{pmatrix};\\[4pt]
F = G+H:\quad & p = p(G)+p(H)+1, & A_F &= \begin{pmatrix} A_G & a^{\mathsf{T}} u_{p(H)} & -v_{p(G)}^{\mathsf{T}}\\ 0 & A_H & v_{p(H)}^{\mathsf{T}}\\ 0 & 0 & 1 \end{pmatrix},\\[-2pt]
 & & &\qquad\text{where } a^{\mathsf{T}} = \text{first column of } A_G;\\[4pt]
F = G^{-1}:\quad & p = p(G)+1, & A_F &= \begin{pmatrix} v_{p(G)}^{\mathsf{T}} & A_G\\ 0 & -u_{p(G)} \end{pmatrix}.
\end{align*}
The construction is purely syntactic, so $A_F$ is defined for every formula, correct or not.
\end{definition}

Spelled out entrywise, which is what the entry oracle of Lemma~\ref{lem:oracle} implements, the three composite rules say the following, with $p = p(G)$ and $q = p(H)$.
\begin{itemize}[nosep]
\item For $F = G\times H$, the block $-v_{p}^{\mathsf{T}}u_{q}$ occupies rows $1,\dots,p$ and columns $p+1,\dots,p+q$, and its only nonzero entry is $-1$, at global position $(p,\,p+1)$.
\item For $F = G+H$, the block $a^{\mathsf{T}}u_{q}$ has first column equal to the first column of $A_G$ and all its other columns zero, and the last column of $A_F$ carries $-1$ at row $p$, $+1$ at row $p+q$, and $+1$ at row $p+q+1$.
\item For $F = G^{-1}$, column $1$ of $A_F$ equals $e_{p}$, the block $A_G$ occupies rows $1,\dots,p$ and columns $2,\dots,p+1$, and the entry at position $(p+1,2)$ is $-1$.
\end{itemize}

\begin{example}\label{ex:pencil}
Take $F = x_1 \times x_2$, a product of two atoms, so that $p(F) = 4$. Then
\[
A_F \;=\;
\begin{pmatrix}
1 & x_1 & 0 & 0\\
0 & -1 & -1 & 0\\
0 & 0 & 1 & x_2\\
0 & 0 & 0 & -1
\end{pmatrix},
\qquad
A_F^{-1} \;=\;
\begin{pmatrix}
1 & x_1 & x_1 & x_1x_2\\
0 & -1 & -1 & -x_2\\
0 & 0 & 1 & x_2\\
0 & 0 & 0 & -1
\end{pmatrix},
\]
and indeed $\bigl(A_F^{-1}\bigr)_{1,4} = x_1x_2 = \widehat{F}$, as Lemma~\ref{lem:invariant} asserts. The wrapper pencil of Theorem~\ref{thm:zerocrit} is the $5\times 5$ matrix
\[
L(F) \;=\; A_{F^{-1}} \;=\;
\begin{pmatrix}
0 & 1 & x_1 & 0 & 0\\
0 & 0 & -1 & -1 & 0\\
0 & 0 & 0 & 1 & x_2\\
1 & 0 & 0 & 0 & -1\\
0 & -1 & 0 & 0 & 0
\end{pmatrix},
\]
which is full because $\widehat{F} \neq 0$.
\end{example}

\begin{lemma}[structure of the pencil]\label{lem:structure}
For every formula $F$, with $p = p(F)$:
\begin{enumerate}[label=(\alph*),nosep]
\item \emph{(first column)} the first column of $A_F$ is the standard basis column $e_{\rho(F)}$, where
\[
\rho(\mathrm{atom}) = 1,\qquad \rho(G\times H) = \rho(G+H) = \rho(G),\qquad \rho(G^{-1}) = p(G),
\]
so in particular it contains only the entries $0$ and $1$, and no variables;
\item \emph{(single symbol)} every entry of $A_F$ is exactly one symbol, namely a scalar in $\{0,\pm 1\}$, a constant of $F$, or a variable with coefficient $1$. Hence
\[
A_F \;=\; Q_0 + \sum_{k=1}^{n} x_k Q_k, \qquad Q_k \in \{0,1\}^{p\times p}\ (k \ge 1),
\]
with $Q_0$ carrying only $0$, $\pm 1$ and constants of $F$;
\item \emph{(occurrences)} each occurrence of $x_k$ in $F$ contributes exactly one entry $1$ to $Q_k$, precisely at the $(1,2)$-position of its $2\times 2$ atom block; distinct occurrences occupy distinct global positions, and every global position is assigned by exactly one clause of Definition~\ref{def:pencil};
\item \emph{(dimensions)} $p(F) \le 3\cdot 2^{\operatorname{depth}(F)} - 1$ and $p(F) \le 2\,|F|$, where $|F|$ is the number of nodes. For $F = F_v$ as in Lemma~\ref{lem:unroll} we get $p(F_v) \le 3\cdot 2^{\,s-1}-1$, which is an $(s+2)$-bit number.
\end{enumerate}
\end{lemma}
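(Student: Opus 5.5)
All four parts of Lemma~\ref{lem:structure} will be proved by a single structural induction on $F$, following the four clauses of Definition~\ref{def:pencil}. In the base case $F$ is an atom $a$ and $A_F = \begin{pmatrix}1 & a\\ 0 & -1\end{pmatrix}$. Its first column is $e_1$, every entry is a single symbol, and the unique occurrence of $a$ (if $a$ is a variable $x_k$) sits at $(1,2)$. Also $p=2 \le 3\cdot 2^0-1$ and $p = 2 \le 2|F|$.

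For (a) we read off the first column of each composite block matrix. For $G\times H$ and $G+H$ the first column of $A_F$ is the first column of $A_G$ padded with zeros, so $\rho(F)=\rho(G)$. For $G^{-1}$ column $1$ is $v_{p(G)}^{\mathsf T}$ followed by a $0$ in the bottom row, which is $e_{p(G)}$. For (b) and (c) we check that every new entry introduced by a clause is $0$ or $\pm1$, except in the sum rule. There the block $a^{\mathsf T}u_{q}$ copies the first column of $A_G$, which by (a) applied inductively to $G$ is $0/1$-valued, so it introduces no variables. This is the one place where (a) is needed for (b)–(c), so the induction should prove (a) alongside them.

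Every variable entry therefore comes from an $A_G$ or $A_H$ block placed at a disjoint offset: the blocks are $(1..p,1..p)$ and $(p+1..p+q,\,p+1..p+q)$ in the product and sum rules, and rows $1..p$, columns $2..p+1$ in the inverse rule. The global positions of distinct occurrences are thus distinct, and each occurrence keeps its $(1,2)$-position within its own atom block. For the same reason every global position lies either inside exactly one sub-block or in exactly one newly specified entry. This requires checking that the new entries avoid the sub-blocks: position $(p,p+1)$ in the product rule, the last column and the $(1..p, p+1)$ column copy in the sum rule, and column $1$ and row $p+1$ in the inverse rule. Each is immediate from the index ranges. Linearity in the variables follows, with each $Q_k$ a $0/1$ matrix.

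For (d), the recursions give $p(G\times H)\le p(G)+p(H)$, $p(G+H)=p(G)+p(H)+1$ and $p(G^{-1})=p(G)+1$. For the depth bound, take $d=\operatorname{depth}(F)$ with children of depth $\le d-1$. In the sum case $p\le 2(3\cdot 2^{d-1}-1)+1 = 3\cdot 2^d-1$, which is the tight case. In the inverse case $p \le 3\cdot2^{d-1} \le 3\cdot 2^d -1$. For the size bound, $p(G)+p(H)+1 \le 2|G|+2|H|+1 \le 2(|G|+|H|+1)$, and the atom and inverse cases are similar. Combining with Lemma~\ref{lem:unroll}(i), $\operatorname{depth}(F_v)\le s-1$ gives $p(F_v) \le 3\cdot 2^{s-1}-1 < 2^{s+1}$, so $p(F_v)$ is an $(s+2)$-bit number. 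None of these steps is deep; the only real obstacle is the careful index bookkeeping in (c), especially for the sum rule, whose extra column and copied first column must be checked not to collide with the $A_H$ block.
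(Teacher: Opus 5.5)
Your proposal is correct and follows essentially the same route as the paper. Both arguments use a structural induction over the four clauses of Definition~\ref{def:pencil}, and both prove (a) together with (b)--(c), because the copied column $A_F[i,p(G){+}1]=A_G[i,1]$ in the sum rule is the one place where a child entry is duplicated. Both then use disjointness of the child index ranges for the partition claim, and treat the sum rule as the tight case of the depth bound in (d). Your bound $3\cdot 2^{s-1}-1<2^{s+1}$ actually shows that $p(F_v)$ fits in $s+1$ bits, which gives the stated $(s+2)$-bit claim.
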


\begin{proof}
(a) Induction. An atom's first column is $(1,0)^{\mathsf{T}}$, the product and sum rules copy the first column of $A_G$ padded with zeros, and the first column of the inverse gadget is $e_{p(G)}$.

(b) and (c) Induction. Every position of $A_F$ is one of the following: a structural scalar or zero; an injectively embedded child position; the atom symbol at a $(1,2)$-position; or the copy $A_F[i,\,p(G){+}1] = A_G[i,1]$, which is a $0/1$ scalar by part (a). The last case is the crucial one, since it shows that no variable occurrence is ever duplicated. The structural scalars and zeros are, for the product rule, the entry at $(p(G),p(G){+}1)$ and the off-diagonal zero blocks; for the sum rule, the entries of the last column, the zero columns $p(G){+}2,\dots,p(G){+}p(H)$ restricted to rows $\le p(G)$, and the zero blocks below $A_G$ and in the last row; and for the inverse rule, column $1$ and the last row. The index ranges of the children in every rule are disjoint, so the clause assignment partitions $[p]\times[p]$.

(d) For the depth bound, atoms give $2 = 3\cdot 2^0 - 1$, and a sum of two subformulas of depth at most $d-1$ gives at most $2(3\cdot 2^{d-1}-1)+1 = 3\cdot 2^{d}-1$, while products and inversions give less. For the size bound, $2 = 2\cdot 1$ at atoms, and each rule adds at most $2$ to the dimension while consuming one node.
\end{proof}

\begin{lemma}[invariant for correct formulas; explicit form of {\cite[Thm.~2.6, Prop.~7.1]{HW15}}]\label{lem:invariant}
If $F$ is correct then $A_{F'}$ is invertible over $\ff{X}$ for every subformula $F'$ of $F$, and
\[
\bigl(A_F^{-1}\bigr)_{1,\,p(F)} \;=\; \widehat{F}.
\]
\end{lemma}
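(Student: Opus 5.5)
The plan is a structural induction on $F$ following the four clauses of Definition~\ref{def:pencil}. We prove simultaneously that $A_F$ is invertible over $\ff{X}$ and that $(A_F^{-1})_{1,p(F)} = \widehat{F}$. Invertibility for every subformula then follows automatically, because every subformula of a correct formula is correct.

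To make the induction go through, we strengthen the hypothesis to record a little more about the inverse. By Lemma~\ref{lem:structure}(a), the first column of $A_F$ is $e_{\rho(F)}$, so $A_F^{-1}e_{\rho(F)} = e_1$. We will also need the last row and the first row of $A_F^{-1}$. I would guess, and check on Example~\ref{ex:pencil}, that $(A_F^{-1})_{p,p} = -1$ and that the last row of $A_F^{-1}$ is $-e_p^{\mathsf T}$. This holds in the example and for atoms, since $A^{-1} = \bigl(\begin{smallmatrix}1 & a\\ 0 & -1\end{smallmatrix}\bigr)$. If this invariant fails for some rule, I would weaken it to just the $(p,p)$ entry.

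The cases go as follows.

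\emph{Product.} $A_F$ is block upper triangular. Its inverse has diagonal blocks $A_G^{-1}$ and $A_H^{-1}$, and off-diagonal block $A_G^{-1}v_p^{\mathsf T}u_qA_H^{-1}$. The $(1,p+q)$ entry is therefore $(A_G^{-1})_{1,p}(A_H^{-1})_{1,q} = \widehat G\widehat H$.

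\emph{Sum.} $A_F$ is again block upper triangular, with a final diagonal entry $1$, so invertibility is immediate. To compute the $(1,\text{last})$ entry, solve $A_F z = e_{p+q+1}$ by back-substitution. This gives $z_{\text{last}} = 1$. The $H$-block satisfies $A_H z_H = -e_q$, so $z_H = -A_H^{-1}e_q$ and its first coordinate is $-\widehat H$. The $G$-block satisfies
\[
A_G z_G = e_p - (z_H)_1\, a = e_p + \widehat H\, e_{\rho(G)}.
\]
Using $A_G^{-1}e_{\rho(G)} = e_1$ from Lemma~\ref{lem:structure}(a), we get $(z_G)_1 = \widehat G + \widehat H$.

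\emph{Inverse.} Solve $A_F z = e_{p+1}$. The last row gives $-z_2 = 1$, so $z_2 = -1$. The first $p$ rows give
\[
z_1 e_p + A_G z_{2..p+1} = 0,
\]
hence $z_{2..p+1} = -z_1 A_G^{-1} e_p$. Comparing first coordinates, $-1 = z_2 = -z_1\widehat G$, so $z_1 = \widehat G^{-1}$. This requires $\widehat G \ne 0$, which is exactly correctness of the inversion gate.

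Invertibility in this case needs a separate argument. Multiplying $A_F$ on the right by a suitable matrix brings it to a form whose Schur complement is a $1\times 1$ matrix whose value is a multiple of $\widehat G$. Concretely, I would use column operations to clear the first column, so that $A_F$ is equivalent to $\mathrm{diag}(A_G, \pm\widehat G)$ up to permutation.

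The main obstacle is the inverse case. Both showing that $A_{G^{-1}}$ is invertible exactly when $\widehat G\neq 0$, and handling the permuted block structure, require care. I expect that only $(A_G^{-1})_{1,p}$ and the first-column fact are needed there, so the strengthened invariant may be unnecessary. The remaining cases are routine block computations.
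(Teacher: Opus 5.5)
Your argument is correct and is essentially the paper's proof: the block-triangular inverse for products, blockwise back-substitution using $A_G^{-1}e_{\rho(G)}=e_1$ for sums, and for the inverse gadget a column reduction of $A_F$ to $\begin{pmatrix}0&A_G\\ \widehat G&-u_p\end{pmatrix}$, which gives invertibility from $\widehat G\neq 0$. The paper additionally clears $-u_p$ by a row operation and reads $(A_F^{-1})_{1,p+1}=\widehat G^{-1}$ off the reduced form, whereas you solve $A_F z=e_{p+1}$.

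Drop the guessed strengthening. It is false, but as you suspected it is never used. For $F=G+H$ the last row of $A_F^{-1}$ is $+e_{p(F)}^{\mathsf T}$, so even the weakened claim $(A_F^{-1})_{p,p}=-1$ fails, and for $F=x^{-1}$ one gets $(A_F^{-1})_{3,3}=x^{-1}$.

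Also, since scalars do not commute, write $z_{2..p+1}=-A_G^{-1}e_p\,z_1$ and $z_2=-\widehat G\,z_1$, with the scalar on the right. This still gives $z_1=\widehat G^{-1}$.
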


\begin{proof}
Induction on $F$. Write $f_G = (A_G^{-1})_{1,p(G)}$ and similarly $f_H$.

\emph{Atom.} $A_F^2 = I$, so $A_F^{-1} = A_F$ and the $(1,2)$ entry is the atom.

\emph{Product.} $A_F$ is block-triangular with invertible diagonal blocks, hence invertible, and
\[
A_F^{-1} \;=\; \begin{pmatrix} A_G^{-1} & A_G^{-1} v_{p(G)}^{\mathsf{T}} u_{p(H)} A_H^{-1}\\ 0 & A_H^{-1}\end{pmatrix}.
\]
Its $(1,p)$ entry is $\bigl(A_G^{-1}v_{p(G)}^{\mathsf{T}}\bigr)_1\bigl(u_{p(H)}A_H^{-1}\bigr)_{p(H)} = f_G\, f_H$, with the factors in this order.

\emph{Sum.} $A_F$ is block-triangular with invertible diagonal, hence invertible. Solving $A_F w = e_{p}$ blockwise, the top block of $w$ is
\[
A_G^{-1} v_{p(G)}^{\mathsf{T}} \;+\; A_G^{-1} a^{\mathsf{T}}\,\bigl(u_{p(H)} A_H^{-1} v_{p(H)}^{\mathsf{T}}\bigr).
\]
Now $u_{p(H)}A_H^{-1}v_{p(H)}^{\mathsf{T}} = f_H$ is a single element acting on the right, and $A_G^{-1}a^{\mathsf{T}} = e_1$ because $a^{\mathsf{T}} = A_G e_1$. So the first entry is $f_G + f_H$.

\emph{Inverse.} Correctness gives $G$ correct and $b := \widehat{G} \neq 0$, and by induction $A_G$ is invertible with $f_G = b$. With $w = -A_G^{-1}v_{p(G)}^{\mathsf{T}}$ and $y = u_{p(G)}A_G^{-1}$ we have
\[
\begin{pmatrix} I_{p(G)} & 0\\ y & 1\end{pmatrix}\, A_F\, \begin{pmatrix} 1 & 0\\ w & I_{p(G)}\end{pmatrix} \;=\; \begin{pmatrix} 0 & A_G\\ b & 0\end{pmatrix},
\]
which is invertible since $b \neq 0$. Unwinding, $\bigl(A_F^{-1}\bigr)_{1,p(G)+1} = b^{-1} = \widehat{F}$.
\end{proof}

\begin{lemma}[inversion gadget over a skew field]\label{lem:gadget}
Let $D$ be a skew field, let $A \in D^{p\times p}$, and let
\[
B \;=\; \begin{pmatrix} v_p^{\mathsf{T}} & A\\ 0 & -u_p\end{pmatrix} \;\in\; D^{(p+1)\times(p+1)}.
\]
If $A$ is invertible then $B$ is invertible if and only if $(A^{-1})_{1,p}\neq 0$.
\end{lemma}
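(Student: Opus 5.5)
We reduce to a Schur-complement computation over $D$. Write $B$ in block form with the first column split off: $B = \begin{pmatrix} v_p^{\mathsf{T}} & A\\ 0 & -u_p\end{pmatrix}$, where the top-left is a $p\times 1$ column, the top-right is the invertible $p\times p$ block $A$, the bottom-left is the scalar $0$, and the bottom-right is the $1\times p$ row $-u_p$. Since $A$ is invertible, we eliminate using $A$ as pivot, exactly mirroring the elimination used in the inverse case of Lemma~\ref{lem:invariant}.

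Concretely, we set $w := -A^{-1}v_p^{\mathsf{T}}$ and $y := u_pA^{-1}$. We then check by direct multiplication that
\[
\begin{pmatrix} I_p & 0\\ y & 1\end{pmatrix} B \begin{pmatrix} 1 & 0\\ w & I_p\end{pmatrix} \;=\; \begin{pmatrix} 0 & A\\ b & 0\end{pmatrix},\qquad b := (A^{-1})_{1,p}.
\]
The right factor clears the top-left block: $v_p^{\mathsf{T}} + Aw = 0$. The left factor clears the bottom-right block: $yA - u_p = 0$. What remains in the bottom-left is $yv_p^{\mathsf{T}} = u_pA^{-1}v_p^{\mathsf{T}} = (A^{-1})_{1,p}$. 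In this computation we must respect the order of multiplication, since $D$ is noncommutative. The scalar $b$ arises as a row-vector times matrix times column-vector product, so no commutation is needed.

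The two outer factors are unitriangular and hence invertible over $D$. Therefore $B$ is invertible if and only if $\begin{pmatrix} 0 & A\\ b & 0\end{pmatrix}$ is. Permuting the last column to the front brings this matrix to $\operatorname{diag}(A,b)$ up to the placement of blocks, or equivalently to a block anti-diagonal form. Such a matrix is invertible over a skew field exactly when both $A$ and $b$ are invertible, and its inverse is the anti-diagonal matrix with blocks $b^{-1}$ and $A^{-1}$. Since $A$ is invertible by hypothesis, invertibility holds if and only if $b \neq 0$.

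The only point requiring care is the converse direction: if $b=0$, the matrix has a zero row (its last row), so it is singular. This transfers back to $B$ because the outer factors are invertible. Nothing here is a real obstacle; the main risk is bookkeeping the left/right order in the noncommutative products.
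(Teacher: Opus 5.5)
Your proof is correct and is essentially the paper's argument. The paper uses the same two-sided elimination from the inverse case of Lemma~\ref{lem:invariant}, with the same $w=-A^{-1}v_p^{\mathsf{T}}$ and $y=u_pA^{-1}$, to show that $B$ is equivalent to $\begin{psmallmatrix}0&A\\ b&0\end{psmallmatrix}$ and hence to $\operatorname{diag}(A,b)$ with $b=(A^{-1})_{1,p}$. One small slip: reaching $\operatorname{diag}(A,b)$ requires moving the first column to the end, not the last column to the front, though your explicit inverse for $b\neq 0$ and your zero-row argument for $b=0$ make that permutation step unnecessary anyway.
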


\begin{proof}
This is the two-sided reduction displayed in the inverse case of Lemma~\ref{lem:invariant}, which is valid whenever $A^{-1}$ exists: $B$ is equivalent to $\operatorname{diag}(A,\,b)$ with $b = (A^{-1})_{1,p}$.
\end{proof}

\begin{remark}\label{rem:notsharper}
The unconditional biconditional ``$B$ is full if and only if $A$ is full and $(A^{-1})_{1,p}\neq 0$'' is false. For $A = \begin{psmallmatrix}0&1\\0&0\end{psmallmatrix}$, which is not full, $B$ is a signed permutation matrix and hence full. Only the conditional form above holds. We apply it exclusively to pencils $A_F$ of correct formulas $F$, which are invertible by Lemma~\ref{lem:invariant}.
\end{remark}

\begin{theorem}[zero criterion; after {\cite[\S 7]{HW15}}]\label{thm:zerocrit}
Let $F$ be a correct formula, let $p = p(F)$, and let
\[
L(F) \;:=\; A_{F^{-1}} \;=\; \begin{pmatrix} v_p^{\mathsf{T}} & A_F\\ 0 & -u_p \end{pmatrix}, \qquad N := p+1 .
\]
This is a syntactic construction, defined whether or not $F^{-1}$ is correct. The following are equivalent:
\emph{(1)} $\widehat{F} = 0$;\quad
\emph{(2)} $F^{-1}$ is not a correct formula;\quad
\emph{(3)} $L(F)$ is not invertible over $\ff{X}$;\quad
\emph{(4)} $L(F)$ is not full.
Moreover $L(F) = Q_0 + \sum_{k=1}^{n} x_k Q_k$ with $Q_k \in \{0,1\}^{N\times N}$ for $k \ge 1$, carrying one entry $1$ per occurrence of $x_k$ in $F$, and with $Q_0$ carrying only $0$, $\pm 1$, and constants of $\Phi$, which are integers of bit-size at most $s$ after Lemma~\ref{lem:intconst}.
\end{theorem}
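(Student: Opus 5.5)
I would prove the four equivalences through the chain (1)$\Leftrightarrow$(2)$\Leftrightarrow$(3)$\Leftrightarrow$(4), using Lemma~\ref{lem:invariant}, Lemma~\ref{lem:gadget} and Proposition~\ref{prop:dictionary}. Then I would read off the description of the coefficient matrices from Lemma~\ref{lem:structure}.

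\emph{(1)$\Leftrightarrow$(2).} Since $F$ is correct, every proper subformula of $F^{-1}$ is correct. By Definition~\ref{def:correct}, $F^{-1}$ is then correct exactly when the single new inversion gate is legal, that is, when $\widehat{F}\neq 0$.

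\emph{(1)$\Leftrightarrow$(3).} By Lemma~\ref{lem:invariant} applied to the correct formula $F$, the matrix $A_F$ is invertible over $\ff{X}$ and $(A_F^{-1})_{1,p}=\widehat{F}$. The matrix $L(F)$ is precisely the matrix $B$ of Lemma~\ref{lem:gadget} with $D=\ff{X}$ and $A=A_F$. That lemma gives: $L(F)$ is invertible if and only if $\widehat{F}\neq 0$. Remark~\ref{rem:notsharper} warns that the invertibility of $A_F$ is needed here, and correctness of $F$ supplies it. I expect this to be the one step needing care.

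\emph{(3)$\Leftrightarrow$(4).} The entries of $L(F)$ lie in $\fa{X}$, being scalars, integer constants and variables. Hence Proposition~\ref{prop:dictionary} (i)$\Leftrightarrow$(iii) applies directly.

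For the ``moreover'' clause, note that $L(F)=A_{F^{-1}}$. I would apply Lemma~\ref{lem:structure}(b),(c) to the formula $F^{-1}$. Its variable occurrences are exactly those of $F$. Each occurrence contributes a single entry $1$ to the corresponding $Q_k$, and no entry is duplicated, since the copied first column of $A_F$ in the inverse gadget is replaced by $e_p$ and carries no variables. All other entries are structural scalars $0,\pm1$ or constants of $F$. The constants of $F$ are constants of $\Phi$ (when $F=F_v$ as in Lemma~\ref{lem:unroll}), which are integers of bit-size at most $s$ by Lemma~\ref{lem:intconst}.
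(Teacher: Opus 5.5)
Your proposal is correct and follows the paper's proof essentially step for step: (1)$\Leftrightarrow$(2) from the definition of correctness, (1)$\Leftrightarrow$(3) via Lemma~\ref{lem:invariant} and Lemma~\ref{lem:gadget}, (3)$\Leftrightarrow$(4) via Proposition~\ref{prop:dictionary}, and the structural clause from Lemma~\ref{lem:structure} applied to $F^{-1}$. One small slip in your justification of that last clause: the inverse gadget copies no column, since its first column is simply $e_p$; the copied first column comes from the sum rule, and Lemma~\ref{lem:structure}(a) keeps it variable-free, which parts (b) and (c) already account for, so your conclusion stands.
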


\begin{proof}
For (1) $\Leftrightarrow$ (2), note that $F^{-1}$ appends one inversion gate to the correct formula $F$, so $F^{-1}$ is correct if and only if $\widehat{F}\neq 0$. For (1) $\Leftrightarrow$ (3), by Lemma~\ref{lem:invariant} $A_F$ is invertible with $(A_F^{-1})_{1,p} = \widehat{F}$, and by Lemma~\ref{lem:gadget} $L(F)$ is invertible if and only if $\widehat{F} \neq 0$. For (3) $\Leftrightarrow$ (4), apply Proposition~\ref{prop:dictionary}. The structure statement is Lemma~\ref{lem:structure} applied to $F^{-1}$.
\end{proof}

\subsection{The entry oracle}

\begin{lemma}[entry oracle]\label{lem:oracle}
There is a deterministic algorithm which takes as input a rational circuit $\Phi$ of size $s$ with integer constants, a gate $v$, 
and a query $(k,i,j)$ with $0 \le k \le n$ and $1 \le i,j \le N$, where $N = p(F_v)+1$ is the dimension of the pencil
\[
L_v \;:=\; A_{(F_v)^{-1}} \;=\; Q_0 + \sum_{k} x_k Q_k ,
\]
and outputs the entry $(Q_k)_{i,j}$ in workspace $O(s^{O(1)})$. The output is $0$, $\pm 1$, or a single integer constant 
of $\Phi$ of bit-size at most $s$.
\end{lemma}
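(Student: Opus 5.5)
The oracle descends the syntax tree of $(F_v)^{-1}$ from the root, without ever writing down a path longer than $O(s)$ symbols. Nodes of $F_v$ are identified with port-labelled directed paths in $\Phi$ starting at $v$, as in Lemma~\ref{lem:unroll}(ii). The subformula at a node depends only on the gate $w$ where the path ends. So at each stage the oracle holds only the current gate $w$ of $\Phi$ and the current local indices $(i,j)$ relative to the block $A_{F_w}$, which have $O(s)$ bits by Lemma~\ref{lem:structure}(d). It also holds a counter of the steps taken. The extra top level $(F_v)^{-1}$ is handled by one application of the inverse rule before the loop starts.

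The first ingredient is a subroutine $\mathrm{DIM}(w)=p(F_w)$. This is computed by the recursion $p(\text{atom})=2$, $p(G\times H)=p(G)+p(H)$, $p(G+H)=p(G)+p(H)+1$, $p(G^{-1})=p(G)+1$. Naively this recursion is exponential-time but polynomial-space: recursion depth is at most $s$, and each frame stores a gate name, a partial sum of $O(s)$ bits, and which child is being processed, for $O(s^2)$ space in total. Even simpler, I would compute $p(F_w)$ for all gates in topological order and store the $s$ numbers of $s+2$ bits each, again $O(s^2)$ space. The same table gives $\rho(F_w)$ from Lemma~\ref{lem:structure}(a).

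The main loop applies, at the current gate $w$ with local query $(i,j)$, the entrywise description of Definition~\ref{def:pencil} listed after it. Writing $p=p(G)$ and $q=p(H)$ for the children $g,h$, the loop decides which clause owns position $(i,j)$. The clauses partition $[p(F_w)]^2$ by Lemma~\ref{lem:structure}(c), so exactly one of the following happens.
\begin{itemize}[nosep]
\item If $(i,j)$ is a structural scalar, the loop outputs $0$ or $\pm 1$, provided $k=0$, and $0$ otherwise.
\item If $(i,j)$ lies in a child block, the loop shifts the indices, for example $(i-p,j-p)$ for $H$ or $(i,j-1)$ for $G$ under the inverse rule, moves $w$ to the corresponding child, and continues.
\item For the sum rule's copied column $j=p+1$, $i\le p$, the entry is $A_G[i,1]$. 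This equals $[i=\rho(G)]$, read directly from the table, so the loop terminates without descending.
\end{itemize}
At an atom the loop reads off the $2\times2$ block. It answers $(Q_k)_{1,2}=1$ if the atom is $x_k$, and returns the constant if the atom is a constant and $k=0$. Each iteration decreases the depth, so there are at most $s$ iterations. Space is $O(s^2)$ for the table plus $O(s)$ for the current state, which is $s^{O(1)}$.

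The main obstacle is bookkeeping rather than ideas: getting the index arithmetic of each clause exactly right, especially for the sum rule's last column and the inverse rule's column shift and row $p+1$. Lemma~\ref{lem:structure}(c) is what guarantees the case analysis is exhaustive and unambiguous. One point needs care. A binary gate whose two ports point to the same child must be handled by which block the indices fall in, not by child identity. This is automatic, because we branch on the index ranges.
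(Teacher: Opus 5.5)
Your proposal is correct and follows essentially the same route as the paper. Both precompute the table of $p(g)$ in topological order in $O(s^2)$ bits. Both then run a stackless descent that holds only the current gate and the local indices $(i,j)$, starting with one application of the inverse rule for the virtual top gate $v^{-1}$, for at most $s+1$ steps. The one small difference is the sum rule's copied column $(i,p+1)$ with $i\le p$: you answer it directly as $[i=\rho(G)]$, whereas the paper continues the descent with state $(g_L,i,1)$. Both are valid, but $\rho$ is not read off from the $p$-table. It needs its own recurrence, $\rho(g^{-1})=p(g)$ and $\rho(g\times h)=\rho(g+h)=\rho(g)$, computed alongside it.
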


\begin{proof}
\emph{Preprocessing.} Compute $p(g)$ for every gate $g$ in topological order via the recurrences of Definition~\ref{def:pencil}:
\[
p(\mathrm{atom}) = 2,\qquad p(g\times h) = p(g)+p(h),\qquad p(g+h) = p(g)+p(h)+1,\qquad p(g^{-1}) = p(g)+1 .
\]
By Lemma~\ref{lem:unroll}(ii) the unrolled subformula at a tree node depends only on the gate at which its path ends, so $p(g) = p(F_g)$ is well defined. By Lemma~\ref{lem:structure}(d) each value fits in $s+2$ bits, so the array of values $p(g)$ requires $O(s^{2})$ bits and is computed with $O(s)$-bit additions.

\emph{Descent.} Maintain a state $(g,i,j)$, initialized at the virtual gate $v' := v^{-1}$ with $p(v') = p(v)+1$, whose gadget is the inverse gadget. Repeat the unique matching clause below, writing $P = p(g)$ and, where applicable, $p = p(g_L)$ and $q = p(g_R)$ for the children. Each clause either outputs a value and halts, or replaces the state and continues. An output clause returning a scalar $a$ answers the original query as $a$ if $k = 0$ and as $0$ if $k \ge 1$, and the atom clause for a variable $x_\ell$ answers $1$ if $k = \ell$ and $0$ otherwise.

For an atom with symbol $a$, so that $P = 2$:
\[
(1,1)\mapsto 1,\qquad (1,2)\mapsto a,\qquad (2,1)\mapsto 0,\qquad (2,2)\mapsto -1 .
\]
For $g = g_L \times g_R$, so that $P = p+q$:
\[
(i,j) \;\longmapsto\;
\begin{cases}
(g_L,\;i,\;j), & i\le p,\ j \le p,\\[2pt]
(g_R,\;i-p,\;j-p), & i > p,\ j > p,\\[2pt]
\text{output } -1, & (i,j) = (p,\,p+1),\\[2pt]
\text{output } 0, & \text{otherwise.}
\end{cases}
\]
For $g = g_L + g_R$, so that $P = p+q+1$:
\[
(i,j) \;\longmapsto\;
\begin{cases}
(g_L,\;i,\;j), & i\le p,\ j\le p,\\[2pt]
(g_R,\;i-p,\;j-p), & p < i \le p+q,\ p< j \le p+q,\\[2pt]
(g_L,\;i,\;1), & i \le p,\ j = p+1,\\[2pt]
\text{output } -1, & j = P,\ i = p,\\[2pt]
\text{output } +1, & j = P,\ i \in \{p+q,\;P\},\\[2pt]
\text{output } 0, & \text{otherwise.}
\end{cases}
\]
For $g = g_L^{-1}$, so that $P = p+1$:
\[
(i,j) \;\longmapsto\;
\begin{cases}
\text{output } 1, & j = 1,\ i = p,\\[2pt]
(g_L,\;i,\;j-1), & i \le p,\ j \ge 2,\\[2pt]
\text{output } -1, & i = P,\ j = 2,\\[2pt]
\text{output } 0, & \text{otherwise.}
\end{cases}
\]

\emph{Correctness.} Each clause list is exactly the entrywise description following Definition~\ref{def:pencil}. The clauses partition $[P]\times[P]$, the ranges being disjoint and exhaustive because $p,q \ge 2$, and each clause has at most one continuation. In particular the clause $(i,\,p{+}1)\mapsto(g_L,\,i,\,1)$ delegates to the first column of the child, which is a $0/1$ scalar by Lemma~\ref{lem:structure}(a). Hence the invariant
\begin{quote}
\emph{the answer to the original query is the coefficient of $x_k$, or the scalar part if $k=0$, of the entry of the current pencil at the current position $(i,j)$}
\end{quote}
is preserved, and the value output at the bottom is $(Q_k)_{i,j}$ by Lemma~\ref{lem:structure}(b) and (c).

\emph{Termination and resources.} Every continuation of the algorithm moves to a child gate, one level down the unrolled tree. The gates visited form a directed path of the DAG, so they are pairwise distinct and there are at most $s+1$ steps. The state consists of a gate identifier of $O(\log s)$ bits, the indices $i,j$ of $s+2$ bits each, the query register, and $O(s)$ scratch bits. No recursion stack is required, since every step replaces the state. The total workspace is $O(s^{O(1)})$ for the table plus the space for the continuations.
\end{proof}

\section{Homogenization}\label{sec:homog}

\begin{lemma}[homogenization]\label{lem:homog}
Let $L = Q_0 + \sum_{k=1}^{n} Q_k x_k$ be an affine pencil of dimension $N$ over $\Q$ with $n \ge 1$, and let
\[
L^{h} \;:=\; Q_0\,x_0 + \sum_{k=1}^{n} Q_k\,x_k
\]
over the variables $x_0, x_1, \dots, x_n$. Then $\ncrk(L) = \ncrk(L^{h})$. In particular $L$ is full over $\ff{x_1,\dots,x_n}$ if and only if $L^h$ is full over $\ff{x_0,\dots,x_n}$.
\end{lemma}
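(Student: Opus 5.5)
We prove the two inequalities $\ncrk(L)\le\ncrk(L^h)$ and $\ncrk(L^h)\le\ncrk(L)$ separately. Each follows by pushing a factorization through a ring homomorphism. We use two facts. First, inner rank cannot increase under a homomorphism of rings, since a factorization $A=BC$ through $\rho$ is mapped to a factorization of the image through $\rho$. Second, by Proposition~\ref{prop:dictionary}, the inner rank over the free algebra equals the rank over the free skew field, and over a skew field the rank equals the least inner dimension of a factorization.

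\emph{First inequality, $\ncrk(L)\le\ncrk(L^h)$.} Consider the $\Q$-algebra homomorphism $\fa{x_0,\dots,x_n}\to\fa{x_1,\dots,x_n}$ given by $x_0\mapsto 1$ and $x_k\mapsto x_k$ for $k\ge1$. It sends $L^h$ to $L$. If $L^h=BC$ with inner dimension $\ncrk(L^h)$, then applying the homomorphism entrywise gives a factorization of $L$ with the same inner dimension. Hence $\ncrk(L)\le\ncrk(L^h)$.

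\emph{Second inequality, $\ncrk(L^h)\le\ncrk(L)$.} We work in $D=\ff{x_0,\dots,x_n}$, where $x_0\neq0$ is invertible. We can write
\[
L^h \;=\; \Bigl(Q_0+\sum_{k=1}^n Q_k\,y_k\Bigr)\,x_0,\qquad y_k:=x_k x_0^{-1}\in D .
\]
The scalar matrices $Q_k$ commute with the scalar $x_0$ acting entrywise, so this identity holds. Right multiplication by the invertible scalar matrix $x_0 I_N$ does not change rank over $D$. Therefore $\rk_D(L^h)=\rk_D\bigl(L(y_1,\dots,y_n)\bigr)$, where $L(y)$ is the image of $L$ under the homomorphism $\fa{x_1,\dots,x_n}\to D$, $x_k\mapsto y_k$. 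Now take a factorization $L=BC$ over $\fa{x_1,\dots,x_n}$ with inner dimension $\ncrk(L)$. Its image gives a factorization of $L(y)$ over $D$ with the same inner dimension, so $\rk_D(L(y))\le\ncrk(L)$. Finally, Proposition~\ref{prop:dictionary} in the variables $x_0,\dots,x_n$ gives $\ncrk(L^h)=\rk_D(L^h)$, which completes the inequality.

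The fullness statement follows by taking both pencils to be $N\times N$ and comparing the common rank with $N$.

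The only delicate step is the second inequality. The substitution $x_k\mapsto x_kx_0^{-1}$ lands in the skew field and not in the free algebra. So we must compare ranks over $D$ rather than inner ranks over the polynomial ring, and this is exactly where Proposition~\ref{prop:dictionary} is needed. The plan deliberately avoids any claim that this substitution is honest or injective: only the inequality direction is used, and that holds for arbitrary homomorphisms. The hypothesis $n\ge1$ plays no role beyond making the statement nontrivial.
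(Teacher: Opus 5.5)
Your proof is correct, but it reaches the hard inequality $\ncrk(L^h)\le\ncrk(L)$ through a different key lemma than the paper uses.

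The paper dehomogenizes at the level of matrix substitutions. It takes a random tuple $(X_0,\dots,X_n)$ of $d\times d$ matrices with $d$ large. It asserts that $Q_0\otimes X_0+\sum_k Q_k\otimes X_k$ then has rank $d\cdot\ncrk(L^h)$. It right-multiplies by $I_N\otimes X_0^{-1}$, which gives a substitution $X_k\mapsto X_kX_0^{-1}$ into $L$ of the same rank. You make the same substitution $x_k\mapsto x_kx_0^{-1}$ symbolically inside $\ff{x_0,\dots,x_n}$. You then return to inner rank through Cohn's theorem, Proposition~\ref{prop:dictionary}, which the paper has already stated. The easy inequality is the same in both; your $x_0\mapsto 1$ homomorphism is simply the justification the paper calls ``clear''.

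Your route is deterministic, does not depend on the ground field, and is fully justified by results already on the page. The paper's route stays in the matrix-substitution language of the nc-rank algorithms it invokes. However, it silently relies on three facts. First, the blow-up regularity theorem: for large $d$ some $d$-dimensional substitution, hence a generic one, reaches rank exactly $d\cdot\ncrk$. Second, the generic choice also makes $X_0$ invertible. Third, any $d$-dimensional substitution into $L$ has rank at most $d\cdot\ncrk(L)$, which is the matrix analogue of your factorization push-forward. Its random choice would also require passing to a large enough extension field when working over a small finite field.
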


\begin{proof}
$\ncrk(L)\le\ncrk(L^h)$ is clear. To see that $\ncrk(L)\geq \ncrk(L^h)$, we note that for random matrix substitution $(X_0, \ldots, X_n)$ of suitably large dimension $d$ the rank of scalar matrix  $Q_0 \otimes X_0 + \sum_{k=1}^{n} Q_k \otimes X_k$ is $r \cdot d$ where $r = \ncrk(L^h)$. Hence, rank of the scalar matrix $Q_0 \otimes I_d + \sum_{k=1}^{n} Q_k \otimes X_k X_0 ^{-1}$ is equal to $r \cdot d$ implying $\ncrk(L)\geq \ncrk(L^h)$.
\end{proof}

\begin{definition}[the problem $\ncFULL$]\label{def:ncfull}
An instance of $\ncFULL$ is a tuple $(N,m,w,A_1,\dots,A_m)$ with $A_i \in \Z^{N\times N}$, encoded in binary as follows: a self-delimiting header specifying $N$, $m$ and $w$ in binary, followed by the $mN^{2}$ entries in row-major order, each written in two's complement in exactly $w$ bits. The instance is a \textsc{yes}-instance if the homogeneous pencil $\sum_{i=1}^{m}A_ix_i$ has full noncommutative rank over $\Q$, that is, if it is full in the sense of Definition~\ref{def:ncrank}.
\end{definition}

\begin{theorem}[ {\cite[Thm.~1.5 and Cor.~5.6]{CGGRT26}}]\label{thm:cggrt}
$\ncFULL \in \NC^{2}$. That is, $\ncFULL$ is decided by a uniform family of Boolean circuits, in the sense of Definition~\ref{def:uniform}, of size $M^{O(1)}$, depth $O(\log^{2}M)$ and uniformity space $O(\log M)$, where $M$ is the bit-length of the encoded instance.
\end{theorem}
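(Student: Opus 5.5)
The statement is imported from \cite{CGGRT26}, so the proof does not redo their algorithm. It checks that their result, as stated there, yields exactly the object required by Definition~\ref{def:uniform} and Definition~\ref{def:nci} for the specific encoding fixed in Definition~\ref{def:ncfull}. I would carry this out in three steps.

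\emph{First}, I would extract from \cite[Thm.~1.5 and Cor.~5.6]{CGGRT26} the precise form of the claim. It says that fullness of a homogeneous pencil $\sum_i A_i x_i$ with integer (or rational) entries is decided by a \emph{deterministic} family of bounded fan-in Boolean circuits. These circuits have size polynomial and depth $O(\log^{2})$ in the total bit-length of the input. I would stress \emph{deterministic}: Theorem~\ref{thm:succsim} uses determinism (in its correctness argument, where repeated evaluations of a shared gate must agree), so an $\RNC$ algorithm would not suffice. I would also confirm that their notion of ``full'' is noncommutative rank $N$ over $\Q$, which coincides with Definition~\ref{def:ncrank} by Proposition~\ref{prop:dictionary}. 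Their algorithm reduces to computations with polynomially bounded bit-size over $\Q$, such as determinants and ranks of blown-up matrices via Berkowitz or Mulmuley-type $\NC^2$ routines. So the circuit size and depth are governed by $M = \Theta(mN^2w)$ plus the header.

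\emph{Second}, I would handle the encoding. Their circuits take the entries in some fixed format, and Definition~\ref{def:ncfull} uses a self-delimiting header and $w$-bit two's complement entries. I would insert a front-end layer that parses the header and routes the bit of entry $(i,r,c)$ at bit position $t$ to the corresponding input wire. Given $(N,m,w)$, this routing is an $O(\log M)$-space arithmetic computation on indices, hence of depth $O(\log M)$ and logspace-uniform. Converting two's complement into the sign-magnitude form their circuits expect costs depth $O(\log w)$. Composing this layer with their circuit preserves size $M^{O(1)}$ and depth $O(\log^{2}M)$. The circuits must be indexed by the input length $M$ alone, while the header fixes $(N,m,w)$. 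So I would let $C_M$ first read the header and then run in parallel one sub-circuit for each of the $\le M^{O(1)}$ triples consistent with length $M$. Finally it selects the sub-circuit matching the actual header, or rejects malformed inputs.

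\emph{Third}, and this is the step I expect to be the only delicate one, I would verify the uniformity clause. Definition~\ref{def:uniform} demands a direct-connection procedure in reusable space $u(M)=O(\log M)$, answering output-gate, gate-type and child queries. \cite{CGGRT26} assert logspace-uniformity in the transducer sense. The footnote to Definition~\ref{def:uniform} already converts such a transducer into a direct-connection procedure in $O(\log M)$ space, by scanning its output on the fly while holding only the queried gate name and a pointer. I would apply that footnote, and check that the composition of the parsing layer, the per-triple selection and their circuits remains logspace-uniform. This holds because logspace transducers compose in logspace. The main risk is that \cite{CGGRT26} state uniformity only implicitly. If so, I would fall back on observing that each primitive they use (iterated matrix product, determinant, rank over $\Q$) has a known $\mathsf{DLOGTIME}$- or logspace-uniform $\NC^2$ implementation, and that their construction glues these primitives in a logspace-describable pattern.
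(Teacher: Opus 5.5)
Your proposal matches the paper, which gives no argument beyond citing \cite{CGGRT26} and remarking in the introduction that their algorithm is logspace-uniform. Your header parsing, per-triple selection and transducer-to-direct-connection conversion only spell out bookkeeping the paper leaves implicit. One caution applies to the paper as much as to you: the load-bearing point is your first step, not uniformity. The Edmonds matrix $\sum_{(i,j)\in E} E_{ij}x_{ij}$ of a bipartite graph is full iff the graph has a perfect matching, so a deterministic polynomial-size $\NC^2$ bound for $\ncFULL$ would put bipartite perfect matching in deterministic $\NC$, and you should check that the cited bound is not merely quasi-$\NC$. Quasi-$\NC$ with polylogarithmic uniformity space would still suffice for the application, because Theorem~\ref{thm:succsim} only needs $D(M)$, $\log S(M)$ and $u(M)$ to be polylogarithmic in $M$.
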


\section{Rational identity testing is in polynomial space}\label{sec:main}

\begin{proposition}\label{prop:driver}
Let $\mathrm{ZERO}(\cdot)$ be any procedure which, on every correct rational circuit $\Psi$ of size at most $s$, decides whether $\widehat{\Psi} = 0$ in workspace $Z(s)$. Consider the algorithm \textup{\textsc{Check}}: let $u_1,\dots,u_t$ with $t \le s$ be the inversion gates of $\Phi$ in the input's topological order, with children $v_1,\dots,v_t$; for $i = 1,\dots,t$ run $\mathrm{ZERO}(\Phi_{v_i})$, and if the answer is ``zero'', output ``$\Phi$ is not correct: gate $u_i$ inverts $0$'' and halt; if all $t$ tests pass, output ``$\Phi$ is correct'' together with the answer of $\mathrm{ZERO}(\Phi)$. Then:
\begin{enumerate}[label=(\alph*),nosep]
\item every invocation of $\mathrm{ZERO}$ satisfies its promise, that is, its argument is a correct circuit;
\item \textup{\textsc{Check}} outputs ``not correct'' if and only if $\Phi$ is incorrect, and the reported gate $u_i$ genuinely inverts the zero element, which is a well-defined value;
\item if $\Phi$ is correct, \textup{\textsc{Check}} decides whether $\widehat{\Phi} = 0$.
\end{enumerate}
\textup{\textsc{Check}} makes at most $t+1 \le s+1$ calls, sequentially and with space reuse, in workspace $Z(s) + O(s)$.
\end{proposition}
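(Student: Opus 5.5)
The proof is an induction along the topological order of the inversion gates. The key invariant is: when \textsc{Check} reaches iteration $i$ without halting, every inversion gate $u_1,\dots,u_{i-1}$ has a nonzero, well-defined argument.

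\textbf{Step 1: the subcircuits handed to ZERO are correct.} Fix $i$. The inversion gates occurring in $\Phi_{v_i}$ have strictly smaller index than $u_i$, since they are descendants of $v_i$ and $v_i$ is a child of $u_i$. So they lie among $u_1,\dots,u_{i-1}$. By Definition~\ref{def:correct}, $\val$ is then defined at every gate of $\Phi_{v_i}$. This is a structural induction: $+$ and $\times$ gates are defined when their arguments are, and each inversion gate $u_\ell$ with $\ell<i$ has its argument $\val(v_\ell)$ defined by induction and nonzero by the invariant. Hence $\Phi_{v_i}$ is correct, and the call $\mathrm{ZERO}(\Phi_{v_i})$ meets its promise.

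\textbf{Step 2: maintaining the invariant.} If the call answers ``nonzero'', then $\val(u_i)=\val(v_i)^{-1}$ is defined, and the invariant extends to $i$. If it answers ``zero'', then $\val(v_i)$ is a well-defined element that equals $0$. So $\val(u_i)$ is undefined, $\Phi$ is incorrect, and $u_i$ genuinely inverts zero. This proves the ``if'' direction of (b) and the clause about the reported gate.

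\textbf{Step 3: the remaining parts of (a)--(c).} Suppose all $t$ tests pass. Then the invariant holds for every inversion gate, and the same structural induction as in Step 1 shows that $\val$ is defined at every gate, so $\Phi$ is correct. This gives the ``only if'' direction of (b). It also shows that the final call $\mathrm{ZERO}(\Phi)$ is within its promise, which completes (a). Part (c) follows because that final answer is exactly whether $\widehat{\Phi}=0$.

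One point needs a convention. The statement says ``$\widehat{\Phi}$'' of $\Phi$, but $\Phi_{v_i}$ has output gate $v_i$. Under Definition~\ref{def:correct}, correctness of a circuit quantifies over all its gates. I will take $\Phi_{v_i}$ to mean the restriction of $\Phi$ to the gates reachable from $v_i$, so that its correctness depends only on the inversion gates below $v_i$. This matches Definition~\ref{def:circuit}.

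\textbf{Step 4: resources.} \textsc{Check} keeps only a few registers: the loop counter $i$, the position of $u_i$ and $v_i$ in the input list, and the answer bit, which is $O(\log s)$ bits. It does not write $\Phi_{v_i}$ out explicitly. Instead, ZERO reads it from $\Phi$ together with $v_i$ as designated output, or writes a copy of at most $s$ gates into $O(s)$ reusable space. Either way the size is at most $s$, so each call uses $Z(s)$ space. The calls are sequential and only one bit survives each call, so the space is reused, giving $Z(s)+O(s)$ in total. The number of calls is $t+1\le s+1$.

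I do not expect any real obstacle. The only delicate point is the one in Step 1: checking that the order of the loop guarantees the promise of ZERO, since ZERO's behaviour is unspecified on incorrect inputs.
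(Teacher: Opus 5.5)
Your proof is correct and follows the paper's argument. Your invariant is the paper's assertion $C_i$, and, as in the paper, $\Phi_{v_i}$ is correct because every inversion gate in it precedes $u_i$ in the topological order; this makes $\mathrm{ZERO}$'s answer truthful and lets the invariant advance.

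One cosmetic slip: in (b) your Step~2 proves the ``only if'' direction (output ``not correct'' $\Rightarrow$ $\Phi$ incorrect) and Step~3 proves the ``if'' direction, so your labels are swapped, though both directions are established.
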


\begin{proof}
For $1 \le i \le t+1$ let $C_i$ be the assertion:
\begin{quote}
\emph{if tests $1,\dots,i-1$ all returned ``nonzero'', then every inversion gate of $\Phi$ preceding $u_i$, and all of them if $i = t+1$, has a defined and nonzero argument.}
\end{quote}
We prove $C_i$ by strong induction. Consider $u_j$ with $j < i$ and its child $v_j$. Every inversion gate inside $\Phi_{v_j}$ is reachable from $v_j$ and hence precedes $u_j$ in every topological order, so it equals some $u_{j'}$ with $j' < j$. By induction, using $C_j$, all of them had defined and nonzero arguments when test $j$ ran. So $\Phi_{v_j}$ was correct, since induction along a topological order of $\Phi_{v_j}$ shows that $\val$ is defined at all its gates, the call met its promise, and its truthful answer ``nonzero'' means $\val(v_j) \neq 0$. This establishes $C_i$, and also (a).

For (b), if test $i$ is the first to return ``zero'' then $\Phi_{v_i}$ is correct and $\val(v_i) = 0$ is a well-defined zero, minimality ruling out ``undefined'', so $u_i$ inverts $0$. Conversely, if $\Phi$ is incorrect then the tests cannot all pass, by $C_{t+1}$. Part (c) is immediate from (a) and $C_{t+1}$.
\end{proof}

\begin{theorem}\label{thm:main}
There is a deterministic Turing machine which, given a rational circuit $\Phi$ of size $s$ over $\Q$ in noncommuting variables $x_1,\dots,x_n$, decides in workspace $O(s^{O(1)})$:
\begin{enumerate}[label=(\alph*),nosep]
\item whether $\Phi$ is a correct circuit, identifying an inversion gate whose argument evaluates to $0$ if it is not; and
\item if $\Phi$ is correct, whether $\widehat{\Phi} = 0$ in $\ff{x_1,\dots,x_n}$.
\end{enumerate}
In particular RIT for noncommutative circuits over $\Q$, with no correctness promise, is in $\PSPACE$. 

\end{theorem}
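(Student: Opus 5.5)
The plan is to compose the pieces already assembled. By Proposition~\ref{prop:driver}, it suffices to build a procedure $\mathrm{ZERO}$ that decides $\widehat{\Psi}=0$ for every \emph{correct} circuit $\Psi$ of size at most $s$ in workspace $Z(s)=s^{O(1)}$. \textsc{Check} then yields both (a) and (b) in workspace $Z(s)+O(s)$, since its $t+1\le s+1$ calls are made sequentially and reuse space.

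First apply Lemma~\ref{lem:intconst}, so that all constants are integers of bit-size at most $s$ and the size grows to at most $4s$. This circuit $\Phi'$ is not written down in full; it is regenerated on demand, which costs $O(s)$ space. Given a correct $\Psi=\Phi_v$, Lemma~\ref{lem:unroll}(iii) says that $F_v$ is correct and $\widehat{F_v}=\widehat{\Psi}$. Theorem~\ref{thm:zerocrit} then gives $\widehat{\Psi}=0$ iff $L_v=A_{(F_v)^{-1}}$ is not full, where $N=p(F_v)+1\le 3\cdot 2^{s-1}$. By Lemma~\ref{lem:homog}, this holds iff the homogeneous pencil $L_v^h=Q_0x_0+\sum_{k\ge1}Q_kx_k$ is not full. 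So $\mathrm{ZERO}(\Psi)$ outputs the negation of $\ncFULL$ on the instance $(N,\,n+1,\,w,\,Q_0,\dots,Q_n)$, with $w=s+2$. This $w$ suffices because every entry is $0$, $\pm1$, or an integer constant of bit-size at most $s$ (Lemma~\ref{lem:oracle}), so two's complement fits in $s+2$ bits.

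The instance has bit-length $M=O(\text{header})+(n+1)N^2w=2^{O(s)}$, so $b=O(s)$. I will present it succinctly in the sense of Definition~\ref{def:succinct}. The bit oracle works as follows.
\begin{itemize}[nosep]
\item Given a position $\ell$, compute $N$ using the preprocessing of Lemma~\ref{lem:oracle}, and compute the header length.
\item If $\ell$ falls in the header, output the required bit of $N$, $n+1$ or $w$.
\item Otherwise, subtract the header length and divide by $w$ and by $N$ to recover $(k,i,j)$ and the bit offset. Call the entry oracle of Lemma~\ref{lem:oracle} to get $(Q_k)_{i,j}$, and output the requested two's-complement bit.
\end{itemize}
All of this is $O(s)$-bit arithmetic plus the oracle's $s^{O(1)}$ space, so $\sigma=s^{O(1)}$. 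Theorem~\ref{thm:cggrt} supplies a uniform $\NC^2$ family for $\ncFULL$. Corollary~\ref{cor:succNC} with $i=2$ then decides the instance in workspace $O(s^2)+\sigma=s^{O(1)}$, which gives $Z(s)=s^{O(1)}$.

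For the final PIT claim, a noncommutative circuit without inversion gates is trivially correct, so part (b) applies directly.

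I expect the main obstacle to be interface bookkeeping rather than mathematics. The encoding must match Definition~\ref{def:ncfull} exactly, with a self-delimiting header and fixed width $w$. The oracle must answer every position in $[M]$ consistently. In addition, the call of Lemma~\ref{lem:oracle} for the gate $v$ must be nested inside the simulation and inside \textsc{Check} without the spaces multiplying. I will argue this by the reusable-workspace convention: each layer keeps only $O(s)$ to $O(s^2)$ bits of persistent state (the loop index of \textsc{Check}, the path register of Theorem~\ref{thm:succsim}, and the $p(g)$ table), and the spaces add.
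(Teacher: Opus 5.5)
Your proposal is correct and follows the paper's route: the entry oracle of Lemma~\ref{lem:oracle} serves as a succinct presentation of the pencil, and this is fed to the uniform $\NC^{2}$ algorithm of Theorem~\ref{thm:cggrt} through Corollary~\ref{cor:succNC}. The paper's own proof only sketches this composition, and the parts you make explicit are exactly the glue it leaves implicit: \textsc{Check} from Proposition~\ref{prop:driver} for part (a) and for meeting the correctness promise, Lemma~\ref{lem:homog} to obtain a homogeneous $\ncFULL$ instance, and the bit-level encoding with additive reuse of workspace.
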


\begin{proof}
Given the rational circuit $\Phi$ as input, by Lemma~\ref{lem:oracle} we first obtain the succinct representation of its linear pencil $L$ of size $2^{O(s)}$,
where the succinct representation is described by a deterministic $s^{O(1)}$ space algorithm that takes as input the circuit $\Phi$ and indices $k$, $i$ and
$j$ and outputs the entry $Q_k([i,j]$ of the linear pencil $L=Q_0+\sum_{k=1}^n x_kQ_k$. Now, we will simulate the $\NC$ algorithm of 
Theorem~\ref{thm:cggrt} on the succinct input $L$ using the standard polylogspace simulation of $\NC$ circuits (Corollary~\ref{cor:succNC}), where the
space bound is polylogarithmic in the size of the input $L$ (which is $2^{O(s)}$). This amounts to simulating a space $s^{O(1)}$
algorithm on the succinctly presented input $L$ (where the succinct presentation is also given by an $s^{O(1)}$ space algorithm). 
Putting these together immediately gives an $s^{O(1)}$ space bounded algorithm for checking if $\Phi$ is correct and, if so, whether
or not it is identically zero.  
\end{proof}

The following Corollary is immediate.

\begin{corollary}[division-free circuits]\label{cor:pit}
Polynomial identity testing of noncommutative circuits (without degree restriction) over $\Q$, computing polynomials of degree up to $2^{s}$, 
is in $\PSPACE$.
\end{corollary}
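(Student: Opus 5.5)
The statement to prove is Corollary~\ref{cor:pit}. I will obtain it by viewing a division-free noncommutative circuit as a special rational circuit and applying Theorem~\ref{thm:main}(b).

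First, let $\Psi$ be a noncommutative arithmetic circuit of size $s$ over $\Q$ with only $+$ and $\times$ gates. I read it as a rational circuit in the sense of Definition~\ref{def:circuit}. Since $\Psi$ has no inversion gates, Definition~\ref{def:correct} makes $\val$ defined at every gate by structural recursion, so $\Psi$ is correct. By induction along the topological order, each $\val(g)$ lies in the image of $\fa{X}$ in $\ff{X}$ and equals the polynomial that $g$ computes. The degree bound $2^{s}$ needs no attention here: the pencil dimension in Lemma~\ref{lem:structure}(d) depends only on the depth, not on the degree.

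Second, I use the fact recorded before Definition~\ref{def:ncrank} that the canonical map $\fa{X}\hookrightarrow\ff{X}$ is injective. Hence $\Psi$ computes the zero polynomial if and only if $\widehat{\Psi}=0$ in the free skew field.

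Third, I run the machine of Theorem~\ref{thm:main} on $\Psi$. Part (a) reports that $\Psi$ is correct, since there are no inversion gates to test in Proposition~\ref{prop:driver}. Part (b) then decides $\widehat{\Psi}=0$ in workspace $O(s^{O(1)})$. Non-integer rational constants are handled by Lemma~\ref{lem:intconst}, which introduces only inversions of nonzero integers; those are always legal, so correctness is preserved and the size grows only by a factor of $4$. This gives membership in $\PSPACE$.

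There is essentially no obstacle. The only point needing care is the injectivity of $\fa{X}\to\ff{X}$, so that polynomial identity and skew-field identity coincide, and that is already stated in Section~\ref{sec:prelim}.
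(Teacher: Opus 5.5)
Your proposal is correct and is the argument the paper intends when it calls the corollary immediate: a division-free circuit is a correct rational circuit, the injectivity of $\fa{X}\hookrightarrow\ff{X}$ identifies polynomial identity with $\widehat{\Psi}=0$, and Theorem~\ref{thm:main}(b) then decides this in space $O(s^{O(1)})$. One small inconsistency: after Lemma~\ref{lem:intconst} there \emph{are} inversion gates for \textsc{Check} to test, namely the inserted $b^{-1}$, but since they are always legal the tests pass and your conclusion is unaffected.
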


\begin{remark}
Theorem \ref{thm:main} and Corollary \ref{cor:pit} hold for rational circuits over any finite field without any changes in the proofs.
\end{remark}

\subsection*{Acknowledgements}


In preparing this paper we used Anthropic's Claude (Opus~5). The model was used to draft and structure the exposition from our outline. The 
results and text are the authors' responsibility.

\end{document}